\newcommand{\bse}{\begin{subequations}}
\newcommand{\ese}{\end{subequations}}
\newtheorem{theorem}{Theorem}
\newtheorem{lemma}[theorem]{Lemma}
\newtheorem{proposition}[theorem]{Proposition}
\numberwithin{equation}{section}
\title[High-order soliton matrix for an extended nonlinear Schr\"{o}dinger equation]{High-order soliton matrix for an extended nonlinear Schr\"{o}dinger equation}
\author{Huijuan Zhou}
\address[HZ]{School of Mathematical Sciences, Shanghai Key Laboratory of Pure Mathematics and Mathematical Practice, and Shanghai Key Laboratory of Trustworthy Computing \\
East China Normal University \\ Shanghai 200241 \\ People's Republic of China}
\author{Yong Chen}
\address[YC]{School of Mathematical Sciences, Shanghai Key Laboratory of Pure Mathematics and Mathematical Practice, and Shanghai Key Laboratory of Trustworthy Computing \\
East China Normal University \\ Shanghai 200241 \\ People's Republic of China}
\address[YC]{College of Mathematics and Systems Science \\ Shandong University of Science and Technology \\ Qingdao 266590 \\ People's Republic of China}
\address[YC]{Department of Physics \\ Zhejiang Normal University \\ Jinhua 321004 \\ People's Republic of China}
\email{ychen@sei.ecnu.edu.cn}
\begin{document}

\begin{abstract}
 The extended nonlinear Schr\"{o}dinger (ENLS) equation with third-order term and fourth-order term which describes the wave propagation in the optical fibers is more accurate than the NLS equation. A study of high-order soliton matrix is presented for an ENLS equation in the framework of the Riemann-Hilbert problem (RHP). Through a standard dressing procedure and the generalized Darboux transformation (gDT), soliton matrix for simple zeros and elementary high-order zeros in the RHP for the ENLS equation are constructed. Then the N-soliton solutions and high-order soliton solutions for the ENLS equation can be determined. Moreover, collision dynamics along with the asymptotic behavior for the two-solitons and long-time asymptotic estimations for the high-order one-soliton are concretely analyzed. For the given spectral parameters, we can control the propagation direction, velocity, width and other physical quantities of solitons by adjusting the free parameters of ENLS equation.
\end{abstract}

\maketitle

\section{Introduction}

The phenomenon of the solitary wave, which was discovered by the famous British scientist Russell in 1834. He thought that this kind of wave should be a stable solution of fluid motion and named it ``solitary waves'', but he had not confirmed the existence of solitary wave in theory. Until 1895, Korteweg and his student de Vries pointed out that such waves could be approximated as long waves with small amplitude, and thus established the KdV equation. The existence of solitary waves is explained theoretically by the KdV equation. In 1955, Fermi, Pasta and Ulam published ``Studies of nonlinear problems'', so that the study of solitary waves is active again. 10 years later, Kruskal and Zabusky, two American mathematicians, studied the whole process of the interaction between two waves of the KdV equation in detail through numerical calculation using advanced computers. The result indicated that the solitary waves have the property of elastic collision, which is similar to the colliding property of particles. Therefore, Kruskal and Zabusky named them ``solitons" \cite{zk1965}. From then on, the research on solitons began to flourish.

The inverse scattering transform (IST) method was discovered by Gardner, Greene, Kruskal and Miura (GGKM) in 1967 as a method to solve the initial value problem that decay sufficiently rapidly at infinity for the KdV equations \cite{ggkm1967}. In 1968, Lax gave Lax pair of KdV equations \cite{lax1968} and pointed out that this IST was general and could solve the initial value problem of multiple equations, and established the general framework of the solving theory of the IST. Zakharov and Shabat promoted IST by using Lax' thought, and gave the solution of the higher-order KdV equation and cubic Schr\"{o}dinger equation in 1972 \cite{zs1972}, which is the first time to give an example to prove the generality of the IST. In 1973, The initial value problem for the sine-Gordon equation is solved by the IST \cite{cy28}. In the same year, Ablowitz, Newell, Manakov and Shabat et al. studied the long-time behavior of KdV equation and NLS equation according to the IST \cite{12, 13, 14}. In 1975-1976, the continuation method (W-E method) of nonlinear PDE, with only two independent variables, was proposed by Wahlpuist and Estabrook \cite{cy291, cy292}, and an important application of which was to obtain Lax pairs of the equation with the help of Lie algebra, providing a necessary condition for solving the equation with IST. However, obtainment of the solution by the W-E method is more complicated. IST is one of the important discoveries in the field of mathematical physics in the 20th century.

 The IST method was originally solved by using the Gel'Fand-Levitan-Marchenko (GLM) integral equation, although GLM equation can be used to obtain the solution of the equation, the solution process is very complex. The RHP, derived by two famous mathematicians Riemann and Hilbert, was first introduced by Riemann in his doctoral thesis in 1851, and then generalized by Hilbert to a more formal form in 1900, and presented at the international congress of mathematicians in Paris.
In 1976, Zakharov and Manakov used precise steps to give a long-time asymptotic formula for the solution of NLS equation that explicitly depends on the initial value \cite{83lm}.  In the 1980s, Jimbo, Miwa et al. applied the IST to the long-time properties of quantum solvable model \cite{41lm}. In essence, these methods in \cite{83lm, 41lm} requiring a priori judgment on the asymptotic form of the solution have implied the ideas of classical RHP. RHP as a more general method than the IST began applied to integrable systems since the 1980s. For example, GLM theory is equivalent to the RHP for second-order spectral problem, while since there is no GLM theory for the high-order spectral problem, inverse scattering problem needs to be transformed into RHP. Most importantly, the exact long-time asymptotic property of the solution can be obtained through RHP.

Inspired by the work of Zakharov and Manakov, Its developed isomonodromy method and converted the long-time behavior of the initial value problems for the NLS equation into a small neighborhood of the local RHP in 1981, providing a set of practical and strict approaches for analyzing the long-time behavior of integrable equations \cite{It1981}. However, this method still cannot get rid of the prior judgment on the asymptotic form of the solution. In 1989, Zhou studied the connection between the Riemann-Hilbert factorization on self-intersecting contours and a class of singular integral equations with a pair of decomposing algebras \cite{k1989}, providing an effective way to treat the IST problem of first-order systems. Deift and Zhou perfected Its's methods, and proposed nonlinear steepest descent method in 1993 \cite{D1993}. The RHP corresponding to the initial value problem of mKdV equation was studied directly by using the nonlinear steepest descent method, and the long-time behavior of the exact solution of mKdV was obtained. In 1997 and 2003, using nonlinear steepest descent method, long-time behavior of the initial value problem of small dispersion KdV equation \cite{23lm} and  the solutions of NLS equation for weighted Sobolev space initial data \cite{25lm} was studied successively by Deift and Zhou. In 2002, Vartanian studied the long-time behavior of solutions of NLS equations with finite dense initial values \cite{75lm}. Tovbis studied the asymptotic properties of the first term of the solution to the semi-classical limit initial value problem of NLS equation in 2006 \cite{73lm}. In 2009, Monvel, Its and Kotlyarov studied the long-time properties of solutions of focusing NLS equation under periodic boundary conditions on a half-line \cite{67lm}. In 2010, Yang used RHP to study the initial value problems of nonlinear integrable system long-time asymptotic behavior of soliton solution \cite{yjks, yjk2003b}. In 2011, Deift and Park studied the solution of the focused NLS equation with Robin boundary conditions at the origin on a half line long-time behavior \cite{22lm}. Fokas published three papers in 2012 on solving the RHP in integrable systems\cite{36lm,60lm,61lm}. Since 2013, Fan's group began to study the RHP \cite{xj2013, xf2015}, and they have studied the long-time asymptotic behavior of Fokas-Lenells equation under zero boundary conditions based on the nonlinear steepest descent method. The initial boundary value problems of Sasa-Satsuma equations and three wave equations with more complex spectral problems were studied by using the Fokas method. In the last five years, many papers have been published on solving the initial boundary value problem and long-time behavior of integrable equations under the RHP framework. For example, according to the Deift-Zhou nonlinear steepest descent method, Biondini studied the long-time asymptotics for the focusing NLS equation with nonzero boundary conditions at infinity and asymptotic stage of modulational instability \cite{b2017}. Miller et al. investigated rogue waves of infinite order and the Painl\'{e}ve-III hierarchy by using the nonlinear steepest descent method \cite{bl2018}. Bilman gave the large-order asymptotics for multiple-pole solitons of the focusing NLS equation \cite{bb22019}.

It is well-known that the classical method of IST shows that the poles of the scattering coefficients (or zeros of the RHP) can produce soliton solutions. The soliton solutions are usually derived by using one of the several well-known techniques, such as the dressing method or the RHP approach. However, in most literatures only soliton solutions from simple poles are considered. It is usually assumed that a multiple-pole solution can be obtained in a straightforward way by coalescing several distinct poles \cite{no1984} which describe multi-soliton solutions.
Indeed, the soliton dressing matrix corresponding to a multi-soliton solution is a rational matrix function which has distinct simple poles, while the coalescing procedure must produce multiple poles. Soliton solutions corresponding to multiple poles,i.e. the high-order solitons, have been investigated in the literatures \cite{yjk2003a, zys12019}. As described in \cite{gj164}, this high-order soliton can be used to describe the weak bound state of a soliton solution and may appear in the study of line-propagation solitons with nearly the same speed and amplitude height. High-order soliton solutions of some equations, such as sine-Gordon, Schr\"{o}dinger, Kadomtsev-Petviashvili I, N-wave system, derivative NLS and Landau-Lifshitz equations have been studied in the following literatures \cite{nwgj9, nwgj2000, nwgj1999, yjk2003b, llm2, llm2015}.

Recently, we have also done some research related to the RHP and high-order soliton, in \cite{ybss2019} we studied the high-order soliton matrix for Sasa-Satsuma equation in the framework of the RHP. It is noted that pairs of zeros are simultaneously tackled in the situation of the higher-order zeros, which is different from other NLS-type equations. Moreover, collision dynamics along with the asymptotic behavior for the two solitons were analyzed, and long-time asymptotic estimations for the higher-order soliton solution concretely calculated. In this case, two double-humped solitons with nearly equal velocities and amplitudes can be observed. In the same year, we also studied the generalized NLS equation by IST \cite{xe2019}. In this paper, the high-order rogue wave of generalized NLS equation with nonzero boundary was given based on the robust IST method. This method is more convenient than before because we don't have to take a limit. A study of high-order solitons in three nonlocal NLS equations including the PT-symmetric, reverse-time, and reverse-space-time was presented in 2018 \cite{yb}. General high-order solitons in three different equations were derived from the same Riemann-Hilbert solutions of the AKNS hierarchy, except for the difference in the corresponding symmetry relations on the "perturbed" scattering data. Dynamics of general high-order solitons in these equations were further analyzed. It was shown that the high-order fundamental-soliton is moving on different trajectories in nearly equal velocities, and they can be nonsingular or repeatedly collapsing, depending on the choices of the parameters. It was also shown that the high-order multi-solitons could have more complicated wave structures and behaviors which are different from higher-order fundamental solitons.

There is also a wide range of literature concerning the behavior of solitons and their interactions in various integrable systems such as soliton scattering, breather solutions, and soliton bound states have been published recently \cite{xt32020, xt22019, zys22020, wxy2020}. As is known to all, some integrable nonlinear PDEs in mathematical physics have rich mathematical structures and extensive physics applications \cite{yh2020, pt2019}. In particular, it is always possible to find explicit solutions to these equations, such as they often have multi-soliton solutions. Among these integral PDEs, the NLS equation
\begin{equation}\label{nls}
iu_{t}+u_{xx}+2|u|^2u=0,
\end{equation}
has been considered as the most important mathematical model. Eq\eqref{nls} equations can be used to describe wave evolution in scientific fields such as water waves \cite{be1967, za1968}, plasma physics \cite{ha1972}, condensed matter physics, fluids, arterial mechanics and fiber optics \cite{ha19731, ha19732}. However, several phenomena observed in the experiment cannot be explained by NLS equation, as the short soliton pulses get shorter, some additional effects become important. The NLS-type equations with high-order terms have important effects
in fiber optics, Heisenberg spin chain and ocean waves. In order to describe the dynamics of a one-dimensional continuum anisotropic Heisenberg ferromagnetic spin chain with the octuple-dipole interaction or the alpha helical protein with higher-order excitation and interaction under the continuum approximation, an ENLS equation with higher-order odd (third-order) and even (fourth-order)terms has been studied \cite{enlseq2014}. The ENLS equation is as follows:
\begin{equation}\begin{array}{c}\label{enls}
iu_{t}+\frac{1}{2} u_{x x}+|u|^{2} u-i \alpha\left(u_{x x x}+6 u_{x}|u|^{2}\right) \\
+\gamma\left(u_{x x x x}+6 u_{x}^{2} u^{*}+4 u\left|u_{x}\right|^{2}+8 u_{x x}|u|^{2}+2 u_{x x}^{*} u^{2}+6 u|u|^{4}\right)=0.
\end{array}\end{equation}

Here, $t$ is the propagation variable and $x$ is the retarded time in the moving frame, with the function $u(x, t)$ being the envelope of the wave field. The notation is standard in the theory of nonlinear waves. Sometimes $x$ and $t$ are interchanged in optics and water wave
theory. All coefficients in this equation are fixed except for the $\alpha$ and $\gamma$.  The coefficients $\alpha$ and $\gamma$ are two real parameters which control independently the values of third-order dispersion $u_{x x x}$ and that of fourth-order dispersion $u_{x x x x}$. When coefficients $\alpha$ and $\gamma$ are equal to zero, the remaining part is the standard normalized NLS equation. If $\alpha\neq 0, \gamma=0,$ the equation is integrable and is known as the Hirota equation. Furthermore, when $\alpha=0, \gamma\neq0$ the equation is also integrable and known as the Lakshmanan-Porsezian-Daniel (LPD) equation. Ankiewicz and Akhmediev have indicated the integrability and derived the soliton solutions of \eqref{enls} by DT, which motivates us to search and analyze more exact solutions. To the best of our knowledge, the high-order solitons of the ENLS equation have never been reported.

The main subject of the present paper is to research the high-order solitons of the ENLS equation in the framework of the RHP. Through a standard dressing procedure, we can find the soliton matrix for the nonregular RHP with simple zeros. Then combined with gDT, soliton matrix for elementary high-order zeros in the RHP for the ENLS equation are constructed.  Moreover, the influence of free parameter ($\alpha$, $\gamma$) in soliton solutions of ENLS equation on soliton propagation, collision dynamics along with the asymptotic behavior for the two-solitons and long-time asymptotic estimations for the high-order one-soliton are concretely analyzed.  The propagation direction, velocity, width and other physical quantities of solitons can be modulated by adjusting the free parameters of ENLS equation. Our work may be helpful to observe the light pulse waves in optical fibers and guide optical experiments.

This paper is organized as follows. In Section 2, the inverse scattering theory is established for the $2\times2$ spectral problems, and the corresponding matrix RHP is formulated. In Section 3, the N-soliton formula for ENLS equation is derived by considering the simple zeros in the RHP. In Section 4, the high-order soliton matrix and the generalized DT is constructed and the explicit high-order N-soliton formula is obtained, which corresponds to the elementary high-order zeros in the RHP. The final section is devoted to conclusion and discussion.

\section{Inverse scattering theory for ENLS equation}

In this section, we consider the scattering and inverse scattering problem for ENLS equation.

{\bf \subsection{Scattering theory of the spectral problem}}

Considering the spectral problem of the ENLS equation \eqref{enls}:
\begin{equation}\label{b}
Y_{x}=LY,
\end{equation}
\begin{equation}\label{c}
Y_{t}=BY,
\end{equation}
with $2\times2$ matrices $L$ and $B$ in the forms of:
\begin{equation}\notag
\begin{split}
&L=-i\zeta\Lambda+Q,\\
&B=(-i\zeta^{2}+4i\alpha\zeta^{3}+8i\gamma\zeta^4)\Lambda+V_{1},\\
&V_{1}=(\zeta-4\alpha\zeta^2)Q+(\frac{1}{2}-2\alpha\zeta)V-\alpha K+\gamma V_{p}.
\end{split}
\end{equation}
Where
\begin{equation}\notag
\Lambda=\left(\begin{array}{cc}
1 & 0 \\
0 & -1
\end{array}\right),
\end{equation}
\begin{equation}\label{2.8}
Q=\left(\begin{array}{cc}
0 & u \\
-u^{*} & 0
\end{array}\right),
\end{equation}
\begin{equation}\notag
V=\left(\begin{array}{cc}
\mathrm{i}|u|^{2} & \mathrm{i} u_{x} \\
\mathrm{i} u_{x}^{*} & -\mathrm{i}|u|^{2}
\end{array}\right),
\end{equation}
\begin{equation}\notag
\begin{split}
&K=\left(\begin{array}{cc}u u_{x}^{*}-u^{*} u_{x} & -\left(2|u|^{2} u+u_{x x}\right) \\ 2|u|^{2} u^{*}+u_{x x}^{*} & -\left(u u_{x}^{*}-u^{*}u_{x}\right)\end{array}\right),\\
&V_{p}=\left(\begin{array}{cc}\mathrm{i} A_{p}(x, t) & B_{p}(x, t) \\ -B_{p}^{*}(x, t) & -\mathrm{i} A_{p}(x, t)\end{array}\right),
\end{split}
\end{equation}
with
\begin{equation}\notag
\begin{split}
\begin{array}{l}
A_{p}(x, t)=3|u|^{4}-\left|u_{x}\right|^{2}+u u_{x x}^{*}+u^{*} u_{x x}-2 \mathrm{i} \zeta\left(u^{*} u_{x}-u u_{x}^{*}\right)-4 \zeta^{2}|u|^{2}, \\
B_{p}(x, t)=6 \mathrm{i}|u|^{2} u_{x}+\mathrm{i} u_{x x x}+2 \zeta u_{x x}+4 \zeta|u|^{2} u-4 \mathrm{i} \zeta^{2} u_{x}-8 \zeta^{3} u.
\end{array}
\end{split}
\end{equation}
Here, $\zeta$ is a spectral parameter, $Y(x, t, \zeta)$ is a vector  function, and the superscript ``*" represents complex conjugation. The spatial linear operator \eqref{b}  and the temporal linear operator \eqref{c}  are the Lax pair of the ENLS equation \eqref{enls}.
 Supposing $u(x)=u(x, 0) \rightarrow 0$ sufficiently fast as $x \rightarrow \pm\infty$. For
a prescribed initial condition $u(x, 0)$, we seek the solution $u(x, t)$ at any later time t. That is, we solve an initial value problem for the ENLS equation.

Notation
\begin{equation}
E_{1}=e^{-i\zeta\Lambda x-(i\zeta^{2}-4i\alpha\zeta^3-8i\gamma\zeta^4)\Lambda t},
\end{equation}
\begin{equation}\label{2.5}
J=YE_{1}^{-1},
\end{equation}
so that the new matrix function $J$ is $(x, t)$-independent at infinity. Inserting \eqref{2.5} into \eqref{b}-\eqref{c}, we find that the Lax pair \eqref{b}-\eqref{c} becomes
\begin{equation}\label{2.6}
J_{x}=-i\zeta[\Lambda,J]+QJ,
\end{equation}
\begin{equation}\label{2.7}
J_{t}=-(i\zeta^{2}-4i\alpha\zeta^3-8i\gamma\zeta^4)[\Lambda,J]+V_{1}J,
\end{equation}
where $[\Lambda,J]=\Lambda J-J\Lambda$ is the commutator. Notice that both matrices $Q$ and $V_{1}$ are anti-Hermitian, i.e.
\begin{equation}\label{2.9}
Q^{\dag}=-Q, \ \ V_{1}^{\dag}=-V_{1},
\end{equation}
where the superscript $``\dag " $ represents the Hermitian of a matrix. In addition, their traces are both equal to zero, i.e. $trQ = trV_{1} = 0$.

Now we let time t be fixed and is a dummy variable, and thus it will be suppressed in our notation. In the scattering problem, we first introduce matrix Jost solutions $J_{\pm}(x,\zeta)$ of \eqref{2.6} with the following asymptotic at large distances:
\begin{equation}\label{2.10}
J_{\pm}(x,\zeta)\rightarrow I, \ \ \ x\rightarrow\pm\infty.
\end{equation}
Here, $I$ is the $2\times2$ unit matrix. Now we will delineate Jost solutions $J_{\pm}(x,\zeta)$  analytical properties first.

Introducing the notation $E=e^{-i\zeta \Lambda x}$, $\Phi\equiv J_{-}E$ and $\Psi \equiv J_{+}E$. $(\Phi, \Psi)$  satisfy the scattering equation \eqref{b},i.e.
\begin{equation}\label{2.19}
Y_{x}+i\zeta\Lambda Y=QY.
\end{equation}
When we treat the $QY$ term  as an inhomogeneous term,  $E$ is the solution to the homogeneous equation $Y_{x}+i\zeta\Lambda Y=0$, then using the method of variation of parameters as well as the boundary conditions \eqref{2.10}, we can turn \eqref{2.19} into
the following Volterra integral equations:
\begin{equation}\label{2.20}
J_{\pm}(x,\zeta)=I+\int_{\pm\infty}^{x}e^{-i\zeta\Lambda(x-y)}Q(y)J_{\pm}(y,\zeta)e^{-i\zeta\Lambda(y-x)}dy,
\end{equation}

Thus $J_{\pm}(x,\zeta)$ allow analytical continuations off the real axis $\zeta \in R$ as long as the integrals on the right sides of the above Volterra equations converge. Due to the structure \eqref{2.8} of the potential $Q$, we can easily get the following proposition.

\begin{proposition}\label{p1}
 The first column of $J_{-}$ and the second column of $J_{+}$
can be analytically continued to the upper half plane $\zeta \in \mathbb{C}_{+}$ , while the second column of
$J_{-}$ and the first column of $J_{+}$ can be analytically continued to the lower half plane $\mathbb{C}_{-}$.
\end{proposition}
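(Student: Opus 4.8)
The plan is to read off the four analyticity statements directly from the Volterra equations \eqref{2.20}, tracking column by column the sign of the exponential factors produced by the integral kernel. First I would exploit that $\Lambda=\mathrm{diag}(1,-1)$ is diagonal, so the conjugation $A\mapsto e^{-i\zeta\Lambda(x-y)}A\,e^{-i\zeta\Lambda(y-x)}$ acts entrywise, multiplying the $(j,k)$ entry of $A$ by $e^{-i\zeta(\lambda_j-\lambda_k)(x-y)}$ with $\lambda_1=1,\ \lambda_2=-1$. Taking $A=QJ_\pm$ and using the purely off-diagonal structure \eqref{2.8} of $Q$, the only nontrivial exponentials appearing in the integrand are $e^{\pm 2i\zeta(x-y)}$, attached to the $(2,1)$ and $(1,2)$ entries respectively.

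Next I would extract the scalar integral equations for each column. Writing the first column of $J_-$ as $(p,r)^{T}$, equation \eqref{2.20} gives
\begin{equation}\notag
p(x,\zeta)=1+\int_{-\infty}^{x}u(y)\,r(y,\zeta)\,dy,\qquad
r(x,\zeta)=-\int_{-\infty}^{x}u^{*}(y)\,p(y,\zeta)\,e^{2i\zeta(x-y)}\,dy,
\end{equation}
so the single exponential is $e^{2i\zeta(x-y)}$, of modulus $e^{-2\,\mathrm{Im}\,\zeta\,(x-y)}$. Since the domain of integration forces $x-y\ge 0$, this factor stays bounded precisely when $\mathrm{Im}\,\zeta\ge 0$, i.e. on $\mathbb{C}_{+}$. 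The parallel computation for the second column of $J_-$ produces $e^{-2i\zeta(x-y)}$, bounded for $\mathrm{Im}\,\zeta\le 0$, hence on $\mathbb{C}_{-}$. For $J_{+}$ the lower limit becomes $+\infty$, so $x-y\le 0$ throughout and the two half-planes swap: the first column of $J_{+}$ extends to $\mathbb{C}_{-}$ and the second column to $\mathbb{C}_{+}$. This reproduces exactly the four assertions of the proposition.

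The remaining and genuinely technical step is to upgrade ``bounded factor'' to ``analytic continuation''. Here I would solve each Volterra system by Neumann iteration and estimate the successive iterates. Because $|e^{\pm 2i\zeta(x-y)}|\le 1$ on the relevant closed half-plane and $u\to 0$ rapidly as $x\to\pm\infty$, the $n$-th iterate is controlled by $\frac{1}{n!}\left(\int|u|\,dy\right)^{n}$-type bounds, so the series converges absolutely and uniformly on compact subsets of the half-plane. Each term is holomorphic in $\zeta$ (differentiation under the integral sign is justified by the same bounds, or one may invoke Morera's theorem), and a locally uniform limit of holomorphic functions is holomorphic. The main obstacle is thus not the algebra but making this convergence estimate clean: one must check that the exponential never grows on the closed half-plane and that the decay hypothesis on $u$ renders the iterated integrals summable, which is exactly what guarantees that the Jost columns continue analytically into the stated half-planes.
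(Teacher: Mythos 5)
Your argument is correct and follows essentially the same route as the paper: extracting the scalar Volterra equations for each column, observing that the only exponential factor $e^{\pm 2i\zeta(x-y)}$ is bounded on the appropriate closed half-plane because the integration domain fixes the sign of $x-y$, and invoking the decay of $u$ to ensure convergence. The paper states this more tersely (treating only the first column of $J_{-}$ and declaring the rest similar), while you additionally spell out the Neumann-iteration step that upgrades boundedness to locally uniform convergence and hence analyticity; that detail is implicit in the paper's appeal to convergence of the Volterra integrals.
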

\begin{proof}
The integral equation \eqref{2.20} for the first column of $J_{-}$ , say $(\varphi_{1},\varphi_{2})^{T}$, is
\begin{equation}\label{2.22}
\varphi_{1}=1+\int_{-\infty}^{x}u(y)\varphi_{2}(y,\zeta)dy,
\end{equation}
\begin{equation}\label{2.23}
\varphi_{2}=-\int_{-\infty}^{x}u^{*}(y)\varphi_{1}(y,\zeta)e^ {2i\zeta(x-y)}dy.
\end{equation}

When $\zeta \in \mathbb{C}_{+}$, since $e^{2i\zeta (x-y)}$ in \eqref{2.23} is bounded, and $u(x)$ decays to zero sufficiently fast at large distances, both integrals in the above two equations converge. Thus the Jost solution $(\varphi_{1}, \varphi_{2})^{T}$ can be analytically extended to $\mathbb{C}_{+}$. The analytic properties of the other Jost solutions $J_{+}$ can be obtained similarly.
\end{proof}

From Abel's identity, we find that $|J(x, \zeta)|$ is a constant for all $x$. Then using the boundary conditions \eqref{2.10}, we see that
\begin{equation}\label{2.13}
|J_{\pm}(x,\zeta)|=1,
\end{equation}
for all $(x, \zeta)$. Since $\Phi(x, \zeta)$ and $\Psi(x, \zeta)$ are both solutions of the linear equation \eqref{b}, they are linearly related by a scattering matrix $S(\zeta)=(s_{ij})_{2\times2}$:
\begin{equation}\label{2.16}
\Phi(x, \zeta)=\Psi(x, \zeta)S(\zeta),\quad \zeta\in \mathbb{R}.
\end{equation}
$\mathbb{R}$ is the set of real numbers.

Because we need to use scattering matrix $S(\zeta)$ to reconstruct the potential $u(x,t)$, now we need to delineate the analytical properties of $S(\zeta)$. If we express $(\Phi,\Psi)$ as a collection of columns
\begin{equation}\label{2.24}
\Phi=(\phi_{1}^{+}, \phi_{2}^{-}),\quad  \Psi=(\psi_{1}^{-}, \psi_{2}^{+}).
\end{equation}

Where the superscripts $``\pm"$ indicate the half plane of analyticity for the underlying quantities.
 Since
\begin{equation}\label{2.40}
S=\Psi^{-1}\Phi=\left(\begin{matrix}
\hat{\psi_{1}}^{+}\\
\hat{\psi_{2}}^{-}\end{matrix}
\right)(\phi_{1}^{+},\phi_{2}^{-}),
\end{equation}
\begin{equation}\label{2.41}
S^{-1}=\Phi^{-1}\Psi=\left(\begin{matrix}
\hat{\phi_{1}}^{-}\\
\hat{\psi_{2}}^{+}\end{matrix}
\right)(\psi_{1}^{-},\psi_{2}^{+}).
\end{equation}
We see immediately that scattering matrices $S$ and $S^{-1}$ have the following analyticity structures:
\begin{equation}\notag
S=\left(\begin{matrix}
s_{11}^{+},s_{12}\\s_{21},s_{22}^{-}\end{matrix}
\right), \quad
S^{-1}=\left(\begin{matrix}
\hat{s}_{11}^{-},\hat{s}_{12}\\ \hat{s}_{21},\hat{s}_{22}^{+}\end{matrix}
\right).
\end{equation}
Elements without superscripts indicate that such elements do not
allow analytical extensions to $\mathbb{C}_{\pm}$ in general. Because $S$ is $2\times2$
matrix with unit determinant, then we can get
\begin{equation}
\hat{s}_{11} = s_{22},\quad  \hat{s}_{22} =s_{11},\quad \hat{s}_{12}=-s_{12},\quad  \hat{s}_{21} =-s_{21}.
\end{equation}
Hence analytic properties of $S^{-1}$ can be directly read off from analytic properties of $S$.

In order to construct the RHP, we define the Jost solutions
\begin{equation}\label{2.25}
P^{+}=(\phi_{1},\psi_{2})e^{i\zeta\Lambda x}=J_{-}H_{1}+J_{+}H_{2}
\end{equation}
are analytic in $\zeta \in \mathbb{C}_{+}$, here
\begin{equation}\label{2.27}
H_{1}\equiv diag(1,0),\quad H_{2}\equiv diag(0,1).
\end{equation}
In addition, from the Volterra integral equations \eqref{2.20}, we see that the large $\zeta$ asymptotics of these analytical functions are
\begin{equation}\label{2.28}
P^{+}(x,\zeta)\rightarrow I, \zeta \in \mathbb{C}_{+}\rightarrow\infty,
\end{equation}
If we express $\Phi^{-1}$ and $\Psi^{-1}$ as a collection of rows
\begin{equation}\label{2.33}
\Phi^{-1}=\left(\begin{matrix}
\hat{\phi_{1}}\\
\hat{\phi_{2}}\end{matrix}
\right),\ \ \ \ \ \
\Psi^{-1}=\left(\begin{matrix}
\hat{\psi_{1}}\\
\hat{\psi_{2}}\end{matrix}
\right).
\end{equation}
Then by techniques similar to those used above, we can show that the adjoint Jost solutions
\begin{equation}\label{2.34}
P^{-}= e^{-i\zeta\Lambda x}\left(\begin{matrix}
\hat{\phi_{1}}\\
\hat{\psi_{2}}\end{matrix}
\right)=H_{1}J_{-}^{-1}+H_{2}J_{+}^{-1}
\end{equation}
are analytic in $\zeta \in \mathbb{C}_{-}$. In addition,
\begin{equation}\label{2.36}
P^{-}(x,\zeta)\rightarrow I, \zeta \in \mathbb{C}_{-}\rightarrow \infty.
\end{equation}
The anti-Hermitian property \eqref{2.9} of the potential matrix $Q$ gives rise to involution properties in the scattering matrix as well as in the Jost solutions. Indeed, by taking the Hermitian of the scattering equation \eqref{2.6} and utilizing the anti-Hermitian property of the potential matrix $Q^{\dag}=-Q$, we get
\begin{equation}
J^{\dag}_{\pm}(\zeta^{*}) = J^{-1}_{\pm}(\zeta).
\end{equation}
From this involution property as well as the definitions \eqref{2.25} and \eqref{2.34} for $P^{\pm}$ we see that the analytic solutions $P^{\pm}$ satisfy the involution property as well:
\begin{equation}\label{2.57}
(P^{+})^{\dag}(\zeta^{*}) = P^{-}(\zeta).
\end{equation}
In addition, in view of the scattering relation \eqref{2.16} between $J_{+}$ and $J_{-}$, we see that $S$ also satisfies the involution property:
\begin{equation}\label{2.58}
S^{\dag}(\zeta^{*}) = S^{-1}(\zeta).
\end{equation}

{\bf \subsection{Matrix Riemann-Hilbert problem}}

Hence we have constructed two matrices functions $P^{\pm}(x, \zeta)$ which are analytic for $\zeta$ in $\mathbb{C}_{\pm}$, respectively. On the real line, using \eqref{2.16}, \eqref{2.25} and \eqref{2.34}, we easily get
\begin{equation}\label{2.44}
P^{-}(x,\zeta)P^{+}(x,\zeta)= G(x,\zeta),\ \ \ \zeta\in \mathbb{R},
\end{equation}
where
\begin{equation}
G = E(H_{1} +H_{2}S)(H_{1}+S^{-1}H_{2})E^{-1}=E\left(\begin{matrix}
1 \ \ \,\hat{s}_{12}\\s_{21}\ \ 1\end{matrix}
\right)E^{-1}.
\end{equation}

Equation \eqref{2.44} forms a matrix RHP. The normalization condition for this RHP can be obtained from \eqref{2.28} and \eqref{2.36} as
\begin{equation}\label{2.46}
P^{\pm}(x,\zeta)\rightarrow I, \ \ \ \zeta \rightarrow \infty,
\end{equation}
which is the canonical normalization condition.

Recalling the definitions \eqref{2.25} and \eqref{2.34} of $P^{\pm}$ as well as the scattering relation \eqref{2.16}, we see that
\begin{equation}\label{2.52}
|P^{+}|= \widehat{s}_{22} = s_{11},\quad |P^{-}|= s_{22} = \widehat{s}_{11} .
\end{equation}

We consider the solution of the regular RHP first, i.e. : $|P^{\pm}| \neq 0$, i.e. :$\widehat{s}_{22} = s_{11}\neq0$ and $s_{22} = \widehat{s}_{11}\neq0$ in their respective planes of analyticity. Under the canonical normalization condition \eqref{2.46}, the solution
to this regular RHP is unique\cite{yjks}. This unique solution to the regular matrix RHP \eqref{2.44} defies explicit expressions. But its formal solution can be given in terms of a Fredholm integral equation.
To use this Plemelj-Sokhotski formula on the regular RHP \eqref{2.44}, we first rewrite \eqref{2.44} as
\[
\left(P^{+}\right)^{-1}(\zeta)-P^{-}(\zeta)=\widehat{G}(\zeta)\left(P^{+}\right)^{-1}(\zeta), \quad \zeta \in \mathbb{R},
\]
where
\[
\widehat{G}=I-G=-E\left(\begin{array}{cc}
0 & \hat{s}_{12} \\
s_{21} & 0
\end{array}\right) E^{-1}.
\]

$\left(P^{+}\right)^{-1}(\zeta)$ is analytic in $\mathbb{C}_{+},$ and $P^{-}(\zeta)$ is analytic in $\mathbb{C}_{-}$.Applying the Plemelj-Sokhotski formula and utilizing the canonical boundary conditions \eqref{2.46}, the solution to the
regular RHP \eqref{2.44} is provided by the following integral equation:
\[
\left(P^{+}\right)^{-1}(\zeta)=I+\frac{1}{2 \pi i} \int_{-\infty}^{\infty} \frac{\widehat{G}(\xi)\left(P^{+}\right)^{-1}(\xi)}{\xi-\zeta} d \xi, \quad \zeta \in \mathbb{C}_{+}.
\]

In the more general case, the RHP \eqref{2.44} is not regular, i.e. $|P^{+}(\zeta)|$ and $|P^{-}(\zeta)|$ can be zero at certain discrete locations $\zeta_{k} \in \mathbb{C}_{+}$ and $\bar{\zeta}_{k} \in \mathbb{C}_{-}, 1 \leq k \leq N$, where $N$ is the number of these zeros. In view of \eqref{2.52}, we see that $\left(\zeta_{k}, \bar{\zeta}_{k}\right)$ are zeros of the scattering coefficients $\hat{s}_{22}(\zeta)$ and $s_{22}(\zeta)$. Due to the involution property \eqref{2.58}, we have the involution relation
\begin{equation}\label{2.66}
\bar{\zeta}_{k}=\zeta_{k}^{*}.
\end{equation}
For simplicity, we assume that all zeros $\left\{\left(\zeta_{k}, \bar{\zeta}_{k}\right), k=1, \ldots, N\right\}$ are simple zeros of $\left(\hat{s}_{22}, s_{22}\right)$ which is the generic case. In this case, both ker($P^{+}\left(\zeta_{k}\right)$) and ker($P^{-}\left(\bar{\zeta}_{k}\right)$) are spanned by one-dimensional column vector $\left|v_{k}\right\rangle$ and row vector $\left\langle v_{k}\right|$, respectively.
\begin{equation}\label{2.67}
P^{+}\left(\zeta_{k}\right) \left|v_{k}\right\rangle=0, \quad \left\langle v_{k}\right| P^{-}\left(\bar{\zeta}_{k}\right)=0, \quad 1 \leq k \leq N.
\end{equation}
Taking the Hermitian of the first equation in \eqref{2.67} and utilizing the involution properties \eqref{2.57} and \eqref{2.66},
we see that eigenvectors $\left(\left|v_{k}\right\rangle, \left\langle v_{k}\right|\right)$
satisfy the involution property $\left\langle v_{k}\right|=\left|v_{k}\right\rangle^{\dagger}$, vectors $\left|v_{k}\right\rangle$ and $\left\langle v_{k}\right|$ are $x$ dependent, our starting point is \eqref{2.67} for $\left|v_{k}\right\rangle$ and $\left\langle v_{k}\right|$. Taking the $x$ derivative to the $\left|v_{k}\right\rangle$ equation and recalling that $P^{+}$ satisfies the scattering equation \eqref{2.6}, we get
\begin{equation}\label{2.99}
\left|v_{k}(x)\right\rangle=e^{-i\zeta_{k} \Lambda x}\left|v_{k0}\right\rangle,
\end{equation}
where $\left|v_{k0}\right\rangle=\left|v_{k}(x)\right\rangle|_{x=0}.$

Following similar calculations for $\bar{v}_{k},$ we readily get
\[
\left\langle v_{k}(x)\right|=\left\langle \bar{v}_{k0}\right| e^{i \zeta_{k} \Lambda x}.
\]
These two equations give the simple $x$ dependence of vectors $\left|v_{k}(x)\right\rangle$ and $\left\langle \bar{v}_{k}(x)\right|$.
The zeros $\left\{\left(\zeta_{k}, \bar{\zeta}_{k}\right)\right\}$ of $|P^{\pm}(\zeta)|$ as well as vectors ${\left|v_{k}\right\rangle, \left\langle v_{k}\right|}$ in the kernels of $P^{+}\left(\zeta_{k}\right)$ and $P^{-}\left(\bar{\zeta}_{k}\right)$ constitute the discrete scattering data which is also needed to solve the general RHP \eqref{2.44}.

Now we construct a matrix function which could remove all the zeros of this RHP. For this purpose,
we will introduce the rational matrix function:
\[
\Gamma_{j}=I+\frac{\bar{\zeta}_{j}-\zeta_{j}}{\zeta-\bar{\zeta}_{j}} \frac{\left|v_{j}\right\rangle\left\langle v_{j}\right|}{\left\langle v_{j} | v_{j}\right\rangle},
\]
and its inverse matrix
\[
\Gamma_{j}^{-1}=I+\frac{\zeta_{j}-\bar{\zeta}_{j}}{\zeta-\zeta_{j}} \frac{\left|v_{j}\right\rangle\left\langle v_{j}\right|}{\left\langle v_{j} | v_{j}\right\rangle},
\]
where
\[
\begin{array}{l}
\left|v_{i}\right\rangle \in \operatorname{Ker}\left(P^{+} \Gamma_{1}^{-1} \cdots \Gamma_{i-1}^{-1}\left(\zeta_{i}\right)\right),
\quad \left\langle v_{j}|=| v_{j}\right\rangle^{\dagger}.
\end{array}
\]
Therefore, if one is introducing the matrix function:
\[
\Gamma=\Gamma_{N}\Gamma_{N-1} \cdots \Gamma_{1},
\]
\[
\begin{aligned}\label{2.72}
\Gamma(\zeta) &=I+\sum_{j, k=1}^{N} \frac{\left|v_{j}\right\rangle\left(M^{-1}\right)_{j k}\left\langle v_{k}\right|}{\zeta-\bar{\zeta}_{k}}
\end{aligned},
\]
\[
\begin{aligned}\label{2.73}
\Gamma^{-1}(\zeta) &=I-\sum_{j, k=1}^{N} \frac{\left|v_{j}\right\rangle\left(M^{-1}\right)_{j k}\left\langle v_{k}\right|}{\zeta-\zeta_{j}}
\end{aligned}.
\]

$M$ is an $N \times N$ matrix with its $(j, k)$ th element given by
\begin{equation}\label{2.741}
M_{j k}=\frac{\left\langle v_{j}|v_{k}\right\rangle}{\bar{\zeta}_{j}-\zeta_{k}}, \quad 1 \leq j, k \leq N, \\
\end{equation}
then $\Gamma(x, \zeta)$ cancels all the zeros of $P_{\pm},$ and the analytic solutions can be represented as
\[
\begin{array}{l}
P^{+}(\zeta)=\widehat{P}^{+}(\zeta) \Gamma(\zeta), \\
P^{-}(\zeta)=\Gamma^{-1}(\zeta) \widehat{P}^{-}(\zeta).
\end{array}
\]
Here, $\widehat{P}^{\pm}(\zeta)$ are meromorphic $2 \times 2$ matrix functions in $\mathbb{C}_{+}$ and $\mathbb{C}_{-}$, respectively, with finite number of poles and specified residues. Therefore, all the zeros of RHP have been eliminated and we can formulate a regular RHP
\[
\widehat{P}^{-}(\zeta) \widehat{P}^{+}(\zeta)=\Gamma(\zeta) G(\zeta) \Gamma^{-1}(\zeta), \quad \zeta \in \mathbb{R},
\]
with boundary condition:  $\widehat{P}^{\pm}(\zeta)=P^{\pm}(\zeta)\Gamma^{-1} \rightarrow I$ as $\zeta \rightarrow \infty$. As a result, when $\zeta \rightarrow \infty$ we have $ P^{+}(\zeta)=\Gamma.$

{\bf \subsection{Solution of the Riemann-Hilbert Problem}}

In this subsection, we discuss how to solve the matrix RHP \eqref{2.44} in the complex $\zeta$ plane. This inverse problem can be solved by expanding $P^{\pm}$ at large $\zeta$ as
\begin{equation}\label{2.47}
P^{\pm}(x,\zeta) = I +\zeta^{-1}P_{1}^{\pm}(x)+O(\zeta^{-2} ), \quad \zeta\rightarrow\infty,
\end{equation}
and insert \eqref{2.47} into \eqref{2.6}, then by comparing terms of the same power in $\zeta^{-1}$, $\zeta^{0}$. We found that
\begin{equation}\label{2.50}
diag(P^{+}_{1})_{x} = diag(QP^{+}_{1}).
\end{equation}
\begin{equation}\label{2.48}
Q = i[\wedge,P_{1}^{+}]=-i[\wedge,P_{1}^{-}].
\end{equation}
Hence the solution $u$ can be reconstructed by
\begin{equation}\label{2.49}
u = 2i( P_{1}^{+})_{12}= -2i(P_{1}^{-})_{12}.
\end{equation}
This completes the inverse scattering process. How to solve the matrix RHP \eqref{2.44} will be discussed in the next subsection.

{\bf \subsection{Time Evolution of Scattering Data}}

In this subsection, we determine the time evolution of the scattering data. First we determine the time evolution of the scattering matrices $S$ and $S^{-1}$. Our starting point is the definition \eqref{2.16} for the scattering matrix, which can be rewritten as
\[
J_{-} E=J_{+} E S, \quad \zeta \in \mathbb{R}.
\]

Since $J_{\pm}$ satisfies the temporal equation \eqref{2.7} of the Lax pair, then multiplying \eqref{2.7} by the time-independent diagonal matrix $E=e^{-i \zeta \Lambda x},$ we see that $J_{-} E,$ i.e. $J_{+} E S,$ satisfies the same temporal equation \eqref{2.7} as well. Thus, by inserting $J_{+} E S$ into \eqref{2.7}, taking the limit $x \rightarrow+\infty,$ and recalling the boundary condition \eqref{2.10} for $J_{+}$ as well as the fact that $V \rightarrow 0$
as $x \rightarrow \pm \infty,$ we get
\[
S_{t}=-(i\zeta^{2}-4i\alpha\zeta^{3}-8i\gamma\zeta^{4})[\Lambda, S].
\]
Similarly, by inserting $J_{-} E S^{-1}$ into \eqref{2.7}, taking the limit $x \rightarrow-\infty,$ and recalling the asymptotics \eqref{2.10} for $J_{-},$ we get
\[\label{2.103}
\left(S^{-1}\right)_{t}=-(i\zeta^{2}-4i\alpha\zeta^{3}-8i\gamma\zeta^{4})\left[\Lambda, S^{-1}\right].
\]
From these two equations, we get
\begin{equation}\label{2.104}
\frac{\partial \hat{s}_{22}}{\partial t}=\frac{\partial s_{22}}{\partial t}=0,
\end{equation}
and
\begin{equation}\label{2.105}
\frac{\partial \hat{s}_{12}}{\partial t}=-(i\zeta^{2}-4i\alpha\zeta^{3}-8i\gamma\zeta^{4}) \hat{s}_{12}, \quad \frac{\partial s_{21}}{\partial t}=(i\zeta^{2}-4i\alpha\zeta^{3}-8i\gamma\zeta^{4}) s_{21}.
\end{equation}
The two equations in \eqref{2.104} show that $\hat{s}_{22}$ and $s_{22}$ are time independent. Recall that $\zeta_{k}$ and $\bar{\zeta}_{k}$ are zeros of $|P^{\pm}(\zeta)|$, i.e. they are zeros of $\hat{s}_{22}(\zeta)$ and $s_{22}(\zeta)$ in view of \eqref{2.52}. Thus $\zeta_{k}$ and $\bar{\zeta}_{k}$ are also time independent. The two equations in \eqref{2.105} give the time evolution for the scattering data $\hat{s}_{12}$ and $s_{21},$ which is
\[
\hat{s}_{12}(t;\zeta)=\hat{s}_{12}(0 ; \zeta) e^{-(i\zeta^{2}-4i\alpha\zeta^{3}-8i\gamma\zeta^{4})t}, \quad s_{21}(t ; \zeta)=s_{21}(0 ; \zeta) e^{(i\zeta^{2}-4i\alpha\zeta^{3}-8i\gamma\zeta^{4}) t}.
\]
Next we determine the time dependence of the scattering data $ \left|v_{k}\right\rangle$ and $\left\langle v_{j}\right|$. We start with \eqref{2.67} for $\left|v_{k}\right\rangle$ and $\left\langle v_{j}\right|$. Taking the time derivative to the $\left|v_{k}\right\rangle$ equation and recalling that $P^{+}$ satisfies the temporal equation $\eqref{2.67}$, we get
\[
P^{+}\left(\zeta_{k} ; x, t\right)\left(\frac{\partial \left|v_{k}\right\rangle}{\partial t}+(i\zeta^{2}-4i\alpha\zeta^{3}-8i\gamma\zeta^{4})\Lambda \left|v_{k}\right\rangle\right)=0,
\]
thus
\[
\frac{\partial \left|v_{k}\right\rangle}{\partial t}+(i\zeta^{2}-4i\alpha\zeta^{3}-8i\gamma\zeta^{4})\left|v_{k}\right\rangle=0.
\]
Combining it with the spatial dependence \eqref{2.99}, we get the temporal and spatial dependence for the vector $\left|v_{k}\right\rangle$ as
\begin{equation}\label{2.109}
\left|v_{k}\right\rangle(x, t)=e^{-i \zeta \Lambda x-(i\zeta^{2}-4i\alpha\zeta^{3}-8i\gamma\zeta^{4})\Lambda t} \left|v_{k0}\right\rangle,
\end{equation}
where $\left|v_{k0}\right\rangle$ is a constant. Similar calculations for $\left\langle v_{k}\right|$ give
\[\left\langle v_{k}\right|(x, t)=\left\langle v_{k0}\right| e^{i \bar{\zeta_{k}}x+(i\bar{\zeta}^{2}-4i\alpha\bar{\zeta}^{3}-8i\gamma\bar{\zeta}^{4})t}.\]
 We see that the scattering data needed to solve this non-regular RHP is
\begin{equation}\label{2.95}
\left\{s_{21}(\xi), \hat{s}_{12}(\xi), \xi \in \mathbb{R} ; \quad \zeta_{k}, \bar{\zeta}_{k}, \left|v_{j}\right\rangle, \left\langle v_{k}\right|, 1 \leq k \leq N\right\}.
\end{equation}
This is called minimal scattering data. From this scattering data at any later time, we can solve the non-regular RHP \eqref{2.44} with zeros \eqref{2.67}, and thus reconstruct the solution $u(x, t)$ at any later time from the formula \eqref{2.49}. So far, the IST process for ENLS equation \eqref{enls} has been completed.

\section{N-Soliton Solutions}

It is well known that when scattering data $\hat{s}_{12}=s_{21}=0$, the soliton solutions  correspond to the reflectionless potential. Then jump matrix $G = I$, $\widehat{G} = 0$. Due to $ P^{+}(\zeta)=\Gamma, \zeta\rightarrow\infty$. Recall to \eqref{2.49}, we can get
\begin{equation}\label{2.120}
u(x,t)=2i(\sum_{j,k=1}^{N}\left|v_{j}\right\rangle(M^{-1})_{jk}\left\langle v_{k}\right|)_{12}.
\end{equation}
Here vectors $\left|v_{j}\right\rangle$ are given by \eqref{2.109}, $\left\langle v_{k}\right|=\left|v_{k}\right\rangle^{\dag}$, and matrix $M$ is given by \eqref{2.741}. Without loss of generality, we let $\left|v_{k0}\right\rangle= (c_{k}, 1)^{T}$. In addition, we introduce the notation
\begin{equation}\label{2.121}
\theta_{k}=-i\zeta_{k}x-(i\zeta_{k}^{2}-4i\alpha\zeta_{k}^{3}-8i\gamma\zeta_{k}^{4})t.
\end{equation}
Then the above solution $u$ can be written out explicitly as
\begin{equation}\label{2.122}
u(x,t)=2i\sum_{j,k=1}^{N}c_{j}e^{\theta_{j}-\theta_{k}^{*}}(M^{-1})_{jk},
\end{equation}
where the elements of the $N \times N$ matrix M are given by
\begin{equation}\label{2.123}
M_{jk}=\frac{1}{\zeta_{j}^{*}-\zeta_{k}}[e^{-(\theta_{k}+\theta_{j}^{*})}
+c_{j}^{*}c_{k}e^{\theta_{k}+\theta_{j}^{*}}].
\end{equation}
Notice that $M^{-1}$ can be expressed as the transpose of $M's$ cofactor matrix divided by $|M|$. Also recall that the determinant of a matrix can be expressed as the sum of its elements along a row or column multiplying their corresponding cofactor. Hence the solution \eqref{2.122} can be rewritten as
\begin{equation}\label{2.124}
u(x,t) = -2i\frac{|F|}{|M|},
\end{equation}

where $F$ is the following $(N +1)\times(N +1)$ matrix:
\begin{equation}
\left(\begin{matrix}
0&e^{-\theta_{1}^{*}}&...&e^{-\theta_{N}^{*}}\\
c_{1}e^{\theta_{1}}&M_{11}&...&M_{N1}\\
.&.&.&.\\
.&.&.&.\\
.&.&.&.\\
c_{N}e^{\theta_{N}}&M_{1N}&...&M_{NN}
\end{matrix}
\right).
\end{equation}

Let $N = 1$, $\zeta1=\xi+i\eta$, $c1=1$ the solution \eqref{2.124} is
 \begin{equation}
 \begin{split}
u&={\frac {2\,i{c_{1}}\, ({\zeta}_{1}^{*}-{\it \zeta_{1}}) {{\rm e}^{-\theta_{1}^{*}+\theta_{1}}}}{(|{c_{1}}|) ^{2}{{\rm e}^{\theta_{1}^{*}+\theta_{1}}}+{
{\rm e}^{-\theta_{1}^{*}-\theta_{1}}}}}\\
&=2\eta\,{\rm sech} (2\,\eta\, A)\,{{\rm e}^{2\,i((8\,{\eta}^{4}\gamma-
48\,{\eta}^{2}\gamma\,{\xi}^{2}+8\,{\xi}^{4}\gamma-12\,\alpha\,{\eta}^
{2}\xi+4\,{\xi}^{3}\alpha+{\eta}^{2}-{\xi}^{2})t-x\xi)}},
\end{split}
\end{equation}
where
\begin{equation}\notag
A=((32\,{\eta}^{2}\gamma\,\xi-32\,
\gamma\,{\xi}^{3}+4\,\alpha\,{\eta}^{2}-12\,\alpha\,{\xi}^{2}+2\,
\xi)t+x).
\end{equation}

This solution is a solitary wave. Its amplitude function $|u|$ has the shape of a hyperbolic secant with peak amplitude $2\eta,$ and its velocity is $-(32\,{\eta}^{2}\gamma\,\xi-32\,\gamma\,{\xi}^{3}+4\,\alpha\,{\eta}^{2}-12\,\alpha\,{\xi}^{2}+2\,\xi).$ The phase of this solution depends linearly on both space $x$ and time $t.$ The spatial gradient of the phase is proportional to the speed of the wave. This solution is called a single-soliton solution of the ENLS equation\eqref{enls}.

Solve the equation $|u|^{2}=b$, $0<b<4\eta^{2}$, we can get
\begin{equation}
\begin{split}
x_{1}=\frac{-128\gamma\eta^3t\xi+128\gamma\eta t\xi^3-16\alpha\eta^3t+48\alpha\eta t\xi^2-8\eta t\xi+ln(\frac{8\eta^2+4\sqrt{4\eta^4-b\eta^2}-b}{b})}{4\eta}, \\
x_{2}=\frac{-128\gamma\eta^3t\xi+128\gamma\eta t\xi^3-16\alpha\eta^3t+48\alpha\eta t\xi^2-8\eta t\xi+ln(\frac{8\eta^2-4\sqrt{4\eta^4-b\eta^2}-b}{b})}{4\eta}.
\end{split}
\end{equation}
Notice $d$ is the width of the wave and
\begin{equation}
d=\frac{ln\frac{8\eta^2-4\sqrt{-\eta^2(-4\eta^2+b)}-b}{b}
-ln\frac{8\eta^2+4\sqrt{-\eta^2(-4\eta^2+b)}-b}{b})}{4\eta}.
\end{equation}
This means that the wave width is only related to the imaginary part of the spectral parameter and is not affected by the coefficients $\alpha$ and $\gamma$. The dispersion term and the nonlinear term in the higher order term of the ENLS equation play a good balance, which makes the system energy conservation.

Further, we can derive the center trajectory of the single-soliton solution
\begin{equation}
x=-(32\gamma\xi(\eta^2-\xi^2)+4\alpha(\eta^2-(\sqrt{3}\xi)^{2})+2\xi)t.
\end{equation}
The angle between the center trajectory and the t-axis is $arctan(-32\eta^2\gamma\xi+32\gamma\xi^3-4\alpha\eta^2+12\alpha\xi^2-2\xi)$.
Generally speaking, the third-order and  fourth-order coefficients $(\alpha, \gamma)$ affect both the velocity of the soliton and the slope of the central trajectory. But when the spectral parameter has only an imaginary part $(\xi=0)$ or $|\eta|=|\xi|$, the fourth-order coefficient $\gamma$ no longer affects the above quantities. When $|\eta|=|\sqrt{3}\xi|$, the third-order coefficient $\alpha$ no longer affects the above quantities. Without loss of generality, it can be divided into the following cases:\\
  case 1: $\eta=\xi=1.$ The velocity of the soliton and the slope of the central trajectory are equal to $8\alpha-2$,
so we can take three special cases: $\alpha=1/8, \alpha=1/4, \alpha=1/2.$\\
  csae 2: $\xi=0, \eta=1.$ The velocity of the soliton and the slope of the central trajectory are equal to $-4\alpha\eta^2$,
take three special cases: $\alpha=-1,$  $\alpha=0$,  $\alpha=1$.\\
 case 3: $\eta=1, \xi=\frac{1}{\sqrt{3}}$. The velocity of the soliton and the slope of the central trajectory are equal to $-(\frac{64\sqrt{3}}{9}\gamma+\frac{2\sqrt{3}}{3})$, take three special cases: $\gamma=-3/64, \gamma=-3/32, \gamma=-3/16.$\\
  case 4: $\eta=1, \xi=\frac{1}{2}$. The velocity of the soliton and the slope of the central trajectory are equal to $-(12\gamma+\alpha+1)$, fix $\alpha=-1$, and take three special cases: $\gamma=-1, \gamma=0, \gamma=1.$\\
We can control the propagation direction and speed of solitons by adjusting the parameters. The images of the central trajectory under different parameters is shown in Figure 1.

\begin{figure}
\centering
\includegraphics[width=4cm,height=4cm]{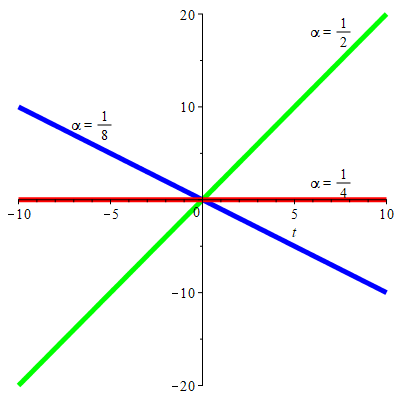}
$1$
\includegraphics[width=4cm,height=4cm]{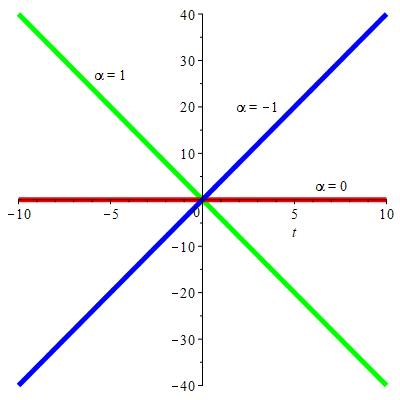}
$2$\\
\includegraphics[width=4cm,height=4cm]{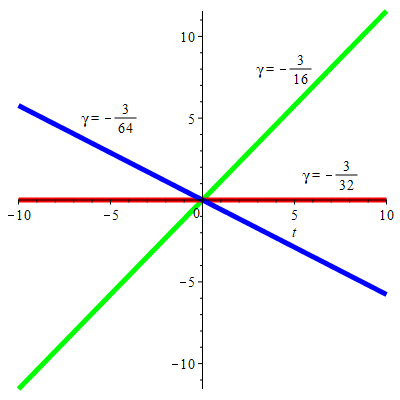}
$3$
\includegraphics[width=4cm,height=4cm]{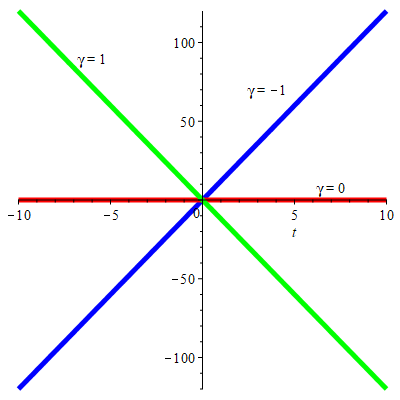}
$4$
\caption{central trajectory for the single-soliton solution of case 1, 2, 3 and 4.}
\label{t1}
\end{figure}

When $N=2$, the two-soliton solutions of ENLS equation can be written out explicitly as follows:
\begin{equation}\label{2.1241}\footnotesize%
\begin{split}
u=\frac{h_{1}\,{{\rm e}^{\Theta'_{1}+\Theta_{1}}}+{h_{2}}\,{{\rm e}^{\Theta'_{2}-\Theta_{2}}}+{h_{3}}\,{{\rm e}^{\Theta_{1}-\Theta'_{1}}}+{h_{4}}\,{{\rm e}^{\Theta'_{2}+\Theta_{2}}}
}{{d_{1}}\,{{\rm e}^{\Theta'_{1}-\Theta_{2}}}+{d_{2}}\,{{\rm e}^{\Theta'_{1}+\Theta_{2}}}+{d_{3}}\,{{\rm e}^{-\Theta'_{1}+\Theta_{2}}}+{d_{4}}\,{{\rm e}^{\Theta'_{1}-\Theta_{1}}}+{ d_{5}}\,{{\rm e}^{
-\Theta'_{1}-\Theta_{2}}}+{d_{6}}\,{
{\rm e}^{-\Theta'_{2}+\Theta_{1}}}},
\end{split}
\end{equation}
where
\begin{equation}\notag
\begin{split}
&d_{1}=(\zeta^{*}_{1}-\zeta^{*}_{2})({\zeta_{2}}-{\zeta_{1}}),\\
&d_{2}=|{c_{2}}|^{2}(\zeta^{*}_{1}-\zeta_{2})(\zeta^{*}_{2}-\zeta_{1}),\\
&d_{3}=|{{c_{1}}}^{2}{{c_{2}}}^{2}|(\zeta^{*}_{2}-\zeta^{*}_{1})({\zeta_{1}}-{\zeta_{2}}),\\ &d_{4}={c_{1}}{c_{2}}^{*}(\zeta^{*}_{1}-\zeta_{1})({\zeta_{2}}-{\zeta_{2}}^{*}),\\
&d_{5}=|{c_{1}}|^{2}(\zeta^{*}_{2}-\zeta_{1})(\zeta_{1}^{*}-{\zeta_{2}}),\\
&d_{6}=c_{1}^{*}{c_{2}}(\zeta^{*}_{2}-\zeta_{2})({\zeta_{1}}-\zeta_{1}^{*}),\\
&h_{1}=2i{c_{2}}\, ({\zeta_{2}}-\zeta^{*}_{1})(\zeta^{*}_{2}-\zeta^{*}_{1})(\zeta^{*}_{2}-{\zeta_{2}}),\\
&h_{2}=-2i{c_{1}}\,({\zeta_{1}}-\zeta^{*}_{1})(\zeta^{*}_{2}-\zeta^{*}_{1})(\zeta^{*}_{2}-{\zeta_{1}}),\\
&h_{3}=-2ic_{2}|{c_{1}}|^{2}(\zeta^{*}_{2}-{\zeta_{2}})(\zeta_{1}-\zeta_{2})({\zeta_{2}^{*}}-{\zeta_{1}}),\\
&h_{4}=2i{c_{1}}\,|{c_{2}}|^{2}({\zeta_{1}}-{\zeta_{2}})({\zeta_{2}}-\zeta^{*}_{1})({\zeta_{1}}-\zeta^{*}_{1}),\\
&\Theta_{1}=\theta_{2}-\theta^{*}_{2},\quad \Theta'_{1}=-\theta_{1}-\theta^{*}_{1},\quad \Theta_{2}=\theta^{*}_{2}+\theta_{2},\quad \Theta'_{2}=\theta_{1}-\theta^{*}_{1}.\\
\end{split}
\end{equation}
Let $\zeta_{1}=\xi_{1}+\eta_{1}$,  $\zeta_{2}=\xi_{2}+\eta_{2}.$

Starting with a simple case, when the spectral parameter $\zeta_{1}$, $\zeta_{2}$  are pure imaginary numbers, i.e. $\xi_{1}=\xi_{2}=0$
\begin{equation}
Re(\theta_{1})=\eta_{1}(4\,t\alpha\,{\eta_{1}}^{2}+x),\\
Re(\theta_{2})=\eta_{2}(4\,t\alpha\,{\eta_{2}}^{2}+x).
\end{equation}
When $\alpha=0$,  the two constituent solitons have equal velocities, thus they will stay together and form a bound state. In a frame moving at this speed, this bound state will be spatially localized, and its amplitude function $|u(x, t)|$ will oscillate periodically with time. Let $\zeta_{1}= 0.7i, \zeta_{2}=0.4i, c_{1}=c_{2}=1$, such a bound state is illustrated in Figure 2 with \textbf{Case A} and \textbf{Case D}. It can be seen that the "width" of this solution changes periodically with time, thus this solution is called a "breather" in literature. When $\alpha\neq0$, two soliton do not form bound states, but the attraction ability between solitons will change with the change of parameters, see the figures of \textbf{Case A} and \textbf{Case D} at below.\\
At the follwing figures we notice:\\
\textbf{Case A:}  $\alpha=0$, $\gamma=1$. In this case the ENLS equation will be decayed to LPD equation, and $u(x,t)$ is the soliton solution of LPD equation. \\
\textbf{Case B:}  $\alpha=1$, $\gamma=1$. This is the ENLS equation.\\
\textbf{Case C:}  $\alpha=1$, $\gamma=0$. In this case the ENLS equation will be decayed to Hirota equation, and $u(x,t)$ is the soliton solution of the Hirota equation.\\
\textbf{Case D:}  $\alpha=0$, $\gamma=0$. In this case the ENLS equation will be decayed to NLS equation, and $u(x,t)$ is the soliton solution of the NLS equation.\\
\begin{figure}
\centering
\includegraphics[width=4cm,height=4cm]{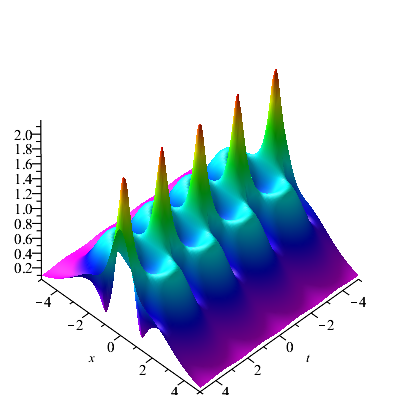}
$A$
\includegraphics[width=4cm,height=4cm]{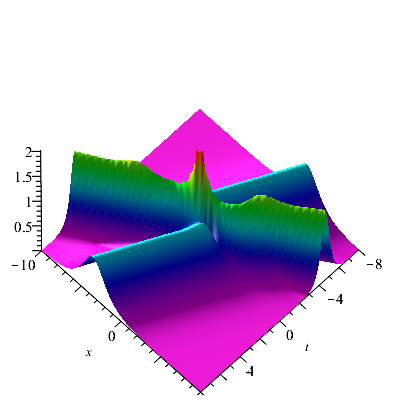}
$B$\\
\includegraphics[width=4cm,height=4cm]{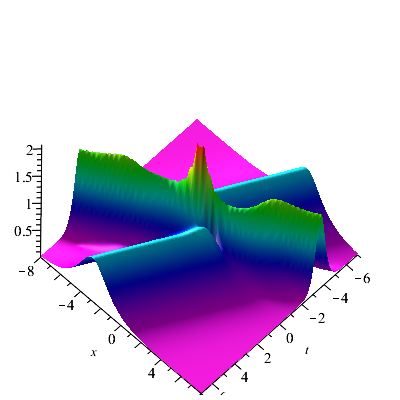}
$C$
\includegraphics[width=4cm,height=4cm]{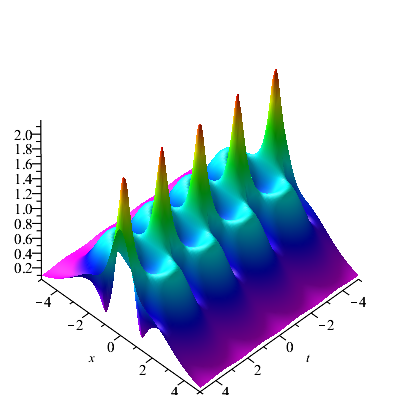}
$D$
\caption{3-D plot for the 2-soliton solution evolution of case A, B, C and D. $\zeta1=0.7i, \zeta2=0.4i.$}
\label{t2}
\end{figure}
\begin{figure}
\centering
\includegraphics[width=4.5cm,height=4cm]{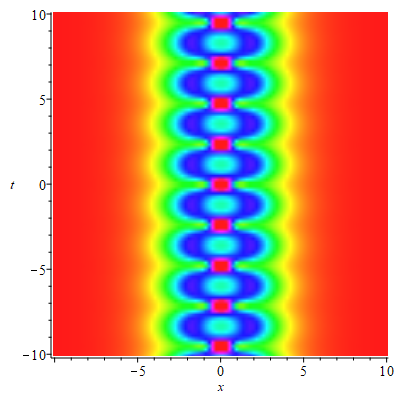}
$A$
\includegraphics[width=4.5cm,height=4cm]{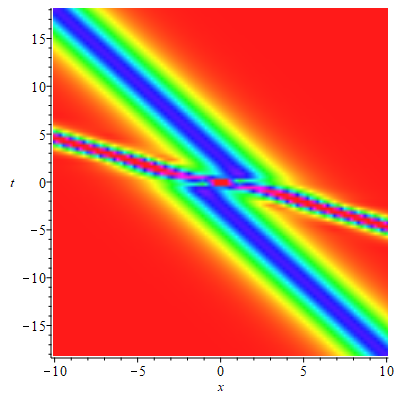}
$B$\\
\includegraphics[width=4.5cm,height=4cm]{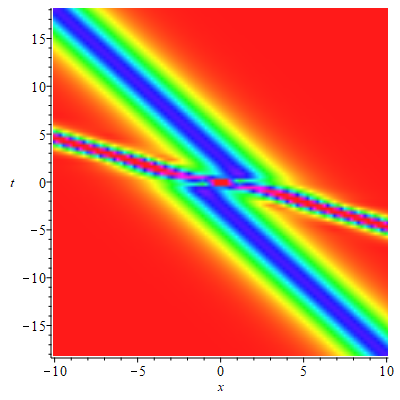}
$C$
\includegraphics[width=4.5cm,height=4cm]{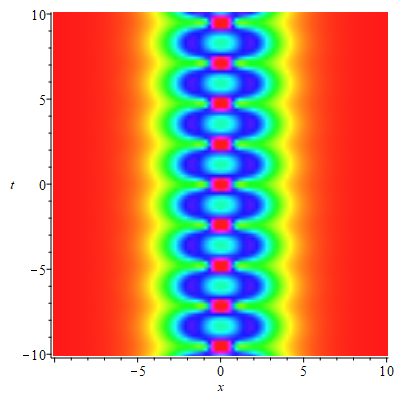}
$D$
\caption{The  plot for the 2-soliton solution evolution of case A, B, C and D.  $\zeta1=0.7i, \zeta2=0.4i.$}
\label{t3}
\end{figure}

Except for the case where both spectral parameters are purely imaginary, let's consider the more complex case. \\
Let $\zeta_{1}= 0.1+0.7i, \zeta_{2}=-0.1+0.4i, c_{1}=c_{2}=1$. We see from Figure 4 that as $t \rightarrow-\infty,$ the solution consists of two single solitons which are far apart and moving toward each other. When they collide, they interact strongly. But when $t \rightarrow \infty$, these solitons re-emerge out of interactions without any change of shape and velocity, and there is no energy radiation emitted to the far field.
Thus the interaction of these solitons is elastic (see Figure 6). This elastic interaction is a remarkable property which signals that the ENLS equation is integrable. There is still some trace of the interaction however. Indeed, after the interaction, each soliton acquires a position shift and a phase shift (see Figure 5). The position of each soliton is always shifted forward (toward the direction of propagation), as if the soliton accelerates during interactions.

Figure 4 is typical of all two-soliton solutions \eqref{2.1241} except $\xi_{1} = \xi_{2}=0$, we can easy found that $(\alpha,\gamma)$ will change the velocity, phase of the soliton figure. We analyze the asymptotic states of the solution \eqref{2.1241} as $t \rightarrow \pm \infty$ and  $(\alpha,\gamma)$ is non-negative.  Without loss of generality, Let $\zeta_{k}=\xi_{k}+i\eta_{k}$ and assume that  $|\xi_{1}|>|\xi_{2}|$. This means that at $t=-\infty,$ soliton-1 is on the right side of soliton-2 and moves slower. Note also that $\eta_{k}>0$ and $\eta_{2}>\eta_{1}$, since $\zeta_{k} \in \mathbb{C}_{+} .$ In the moving frame with velocity -($32\,{\eta_{1}}^{2}\gamma\,\xi_{1}-32\,\gamma\,{\xi_{1}}^{3}+4\,\alpha\,{\eta_{1}}^
{2}-12\,\alpha\,{\xi_{1}}^{2}+2\,\xi_{1}$), so $(\alpha,\gamma)$ will influence the velocity.
\begin{equation}
\begin{split}
&Re(\theta_{1})=\eta_{1} \left( 32\,t\gamma\,\xi_{1}\,{\eta_{1}}^{2}-32\,t\gamma\,{\xi_{1}}^{3}+4
\,t\alpha\,{\eta_{1}}^{2}-12\,t\alpha\,{\xi_{1}}^{2}+2\,t\xi_{1}+x \right)=O(1),\\
&Re(\theta_{2})=\eta_{2} \left( 32\,t\gamma\,\xi_{2}\,{\eta_{1}}^{2}-32\,t\gamma\,{\xi_{1}}^{3}+4
\,t\alpha\,{\eta_{1}}^{2}-12\,t\alpha\,{\xi_{1}}^{2}+2\,t\xi_{1}+x \right)\\
&+\eta_{2}((32\,t\gamma\,\xi_{2}\,{\eta_{2}}^{2}-32\,t\gamma\,\xi_{2}\,{\eta_{1}}^{2})+(32\,\gamma\,({\xi_{1}}^{3}-{\xi_{2}}^{3})+4
\,\alpha\,({\eta_{2}}^{2}-{\eta_{1}}^{2})\\&
+12\,\alpha\,({\xi_{1}}^{2}-{\xi_{2}}^{2})+2\,(\xi_{2}-2\,\xi_{1})))t.
\end{split}
\end{equation}
When $t \rightarrow-\infty, \operatorname{Re}\left(\theta_{2}\right) \rightarrow+\infty .$
When $t \rightarrow+\infty, \operatorname{Re}\left(\theta_{2}\right) \rightarrow-\infty .$
In this case, simple calculations show that the asymptotic state of the solution \eqref{2.124} is
\begin{equation}
u(x, t) \rightarrow
\begin{cases}
2 i\left(\zeta_{1}^{*}-\zeta_{1}\right) \frac{c_{1}^{-} e^{\theta_{1}-\theta_{1}^{*}}}{e^{-\left(\theta_{1}+\theta_{1}^{*}\right)}+\left|c_{1}^{-}\right|^{2} e^{\theta_{1}+\theta_{1}^{*}}}, \quad t \rightarrow-\infty,\\
\\
2 i\left(\zeta_{1}^{*}-\zeta_{1}\right) \frac{c_{1}^{+} e^{\theta_{1}-\theta_{1}^{*}}}{e^{-\left(\theta_{1}+\theta_{1}^{*}\right)}+\left|c_{1}^{+}\right|^{2} e^{\theta_{1}+\theta_{1}^{*}}}, \quad t \rightarrow+\infty,
\end{cases}
\end{equation}
where $c_{1}^{-}= \frac{c_{1}\left(\zeta_{1}-\zeta_{2}\right)}{\left(\zeta_{1}-\zeta_{2}^{*}\right)}$, $c_{1}^{+}=\frac{c_{1}\left(\zeta_{1}-\zeta_{2}^{*}\right)}{\left(\zeta_{1}-\zeta_{2}\right)}$. Comparing this expression with \eqref{2.1241}, we see that this asymptotic solution is a single-soliton solution with peak amplitude $2\eta_{1}$ and velocity $32\,{\eta_{1}}^{2}\gamma\,\xi_{1}-32\,\gamma\,{\xi_{1}}^{3}+4\,\alpha\,{\eta_{1}}^
{2}-12\,\alpha\,{\xi_{1}}^{2}+2\,\xi_{1}$.
 This indicates that if we fix the parameters this soliton does not change its shape and velocity after collision.

\begin{figure}
\centering
\includegraphics[width=4cm,height=4cm]{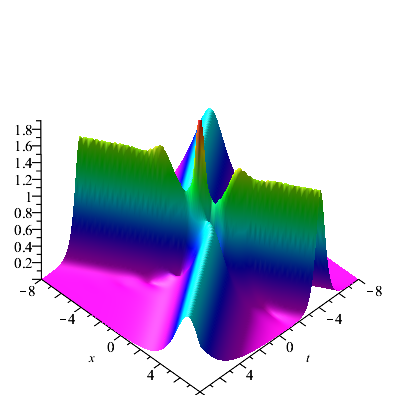}
$A$
\includegraphics[width=4cm,height=4cm]{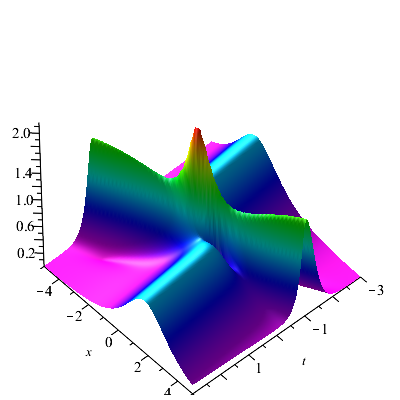}
$B$\\
\includegraphics[width=4cm,height=4cm]{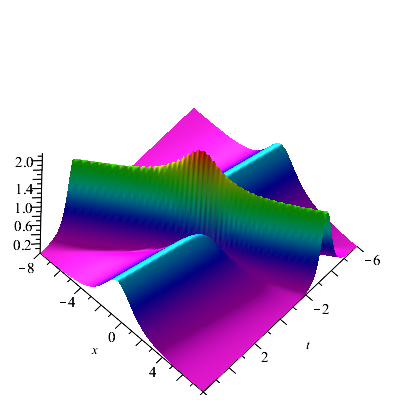}
$C$
\includegraphics[width=4cm,height=4cm]{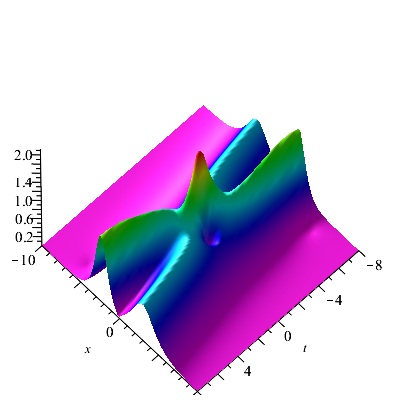}
$D$
\caption{3-D plot for the 2-soliton solution evolution of case A, B, C and D. $\zeta1=0.1+0.7i, \zeta2=-0.1+0.4i.$}
\label{t4}
\end{figure}
\begin{figure}
\centering
\includegraphics[width=4cm,height=4cm]{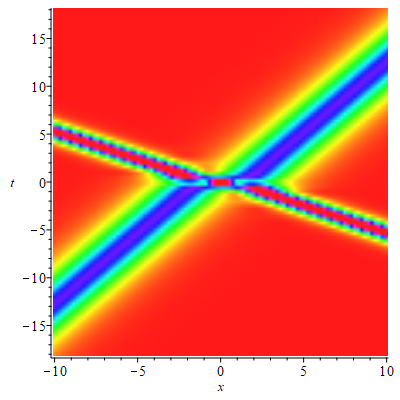}
$A$
\includegraphics[width=4cm,height=4cm]{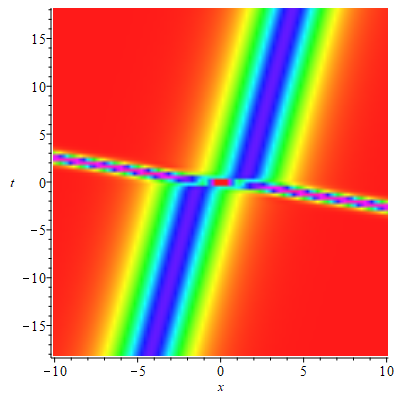}
$B$\\
\includegraphics[width=4cm,height=4cm]{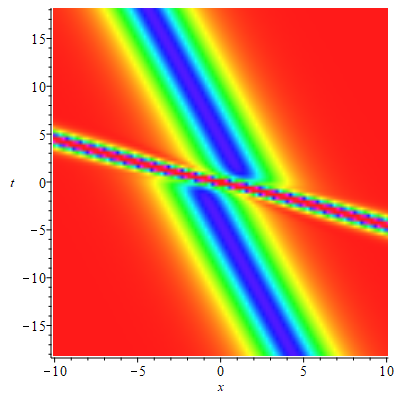}
$C$
\includegraphics[width=4cm,height=4cm]{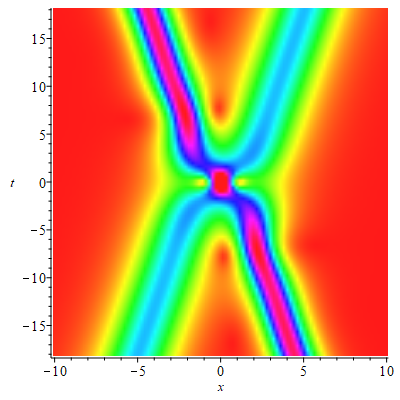}
$D$
\caption{The  plot for the 2-soliton solution evolution of case A, B, C and D.  $\zeta1=0.1+0.7i, \zeta2=-0.1+0.4i.$}
\label{t5}
\end{figure}
\begin{figure}
\centering
\includegraphics[width=4.5cm,height=4.2cm]{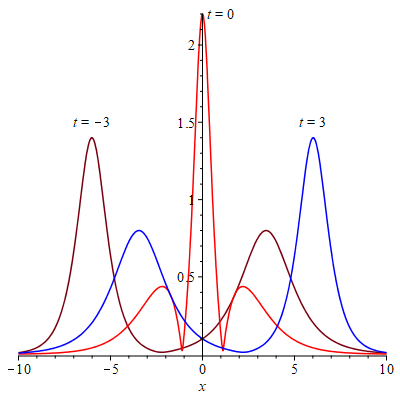}
\centering
\includegraphics[width=4.5cm,height=4.2cm]{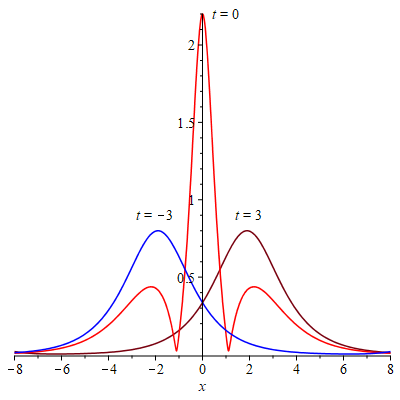}\\
\centering
\includegraphics[width=4.5cm,height=4.2cm]{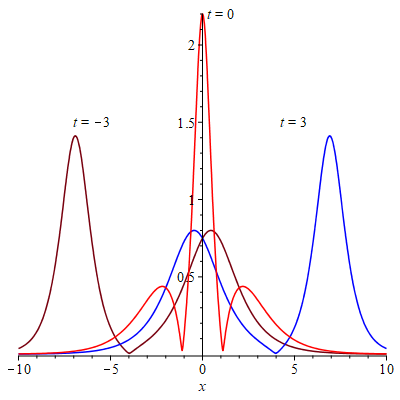}
\centering
\includegraphics[width=4.5cm,height=4.2cm]{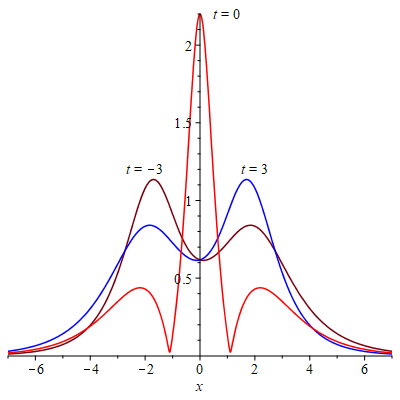}
\caption{The  plot for the 2-soliton solution evolution of case A, B, C and D. $\zeta1=0.1+0.7i, \zeta2=-0.1+0.4i.$}
\label{t6}
\end{figure}

\section{Soliton matrix for high-order zeros}

In this case, following the discussion of simple zeros, we consider the high-order zeros in RHP of ENLS equation. First of all, we let functions $P^{+}(\zeta)$ and $P^{-}(\zeta)$ from above RHP have only one pair of zero of order n, i.e. ${\zeta_{1} ,\bar{\zeta}_{1}}$.
\begin{equation}
|P^{+}(\zeta)|= (\zeta-\zeta_{1})^{n} \varphi(\zeta), \ \ \
|P^{-}(\zeta)|=(\zeta-\bar{\zeta}_{1})^{n}\bar{\varphi}(\zeta),
\end{equation}
where $|\varphi(\zeta_{ 1} )|\neq 0 $ and $|\bar{\varphi}(\bar{\zeta}_{ 1} )|\neq 0$.

Following the idea proposed in \cite{yjk2003a}, we first consider the elementary zero case under the assumption that the geometric multiplicity of $\zeta_{1}$ and $\bar{\zeta}_{1}$ has the same number. Hence, one needs to construct the dressing matrix $\Gamma(\zeta)$ whose determinant is $\frac{(\zeta-\zeta_{1})^{n}}{(\zeta-\bar{\zeta}_{1})^{n}}$. As a special case, we first consider the elementary zeros which have geometric multiplicity 1. In this case, $\Gamma$ is constituted of $n$ elementary dressing factors, i.e. : $\Gamma=\chi_{n}\chi_{n-1} \ldots \chi_{1}$,  where
\[
\begin{array}{l}
\chi_{i}(\zeta)=\mathbb{I}+\frac{\bar{\zeta}_{1}-\zeta_{1}}{\zeta-\bar{\zeta}_{1}} \mathbb{P}_{i}, \quad \mathbb{P}_{i}=\frac{|v_{i}\rangle\langle\bar{v}_{i}|}{\langle\bar{v}_{i} | v_{i}\rangle},\quad |v_{i}\rangle \in \operatorname{Ker}(P_{+} \chi_{1}^{-1} \cdots \chi_{i-1}^{-1}(\zeta_{1})). \\
\end{array}
\]
In addition, if we let $\hat{P}^{+}(\zeta)=P^{+}(\zeta) \chi_{1}^{-1}(\zeta)$ and $\hat{P}^{-}(\zeta)=\chi_{1}(\zeta) P^{-}(\zeta),$ then it is proved that matrices $\hat{P}^{+}(\zeta)$ and $\hat{P}^{-}(\zeta)$ are still holomorphic in the respective half plans of $\mathbb{C}$. Moreover, $\zeta_{1}$ and $\bar{\zeta}_{1}$ are still a pair of zeros of $|\hat{P}^{+}(\zeta)|$ and $|\hat{P}^{-}(\zeta)|$, respectively. Thus, $\Gamma(\zeta)^{-1}$ cancels all the high-order zeros for $|P^{+}(\zeta)|$. Moreover, it is necessary to reformulate the dressing factor into summation of fractions, then we derive the soliton matrix $\Gamma(\zeta)$ and its inverse for a pair of an elementary high-order zero. The results can be formulated in the following lemma.

\begin{lemma}
 Consider a pair of an elementary high-order zero of order $n:\left\{\zeta_{1}\right\}$ in $\mathbb{C}_{+}$ and $\left\{\bar{\zeta}_{1}\right\}$ in
$\mathbb{C}_{-}$. Then the corresponding soliton matrix and its inverse can be cast in the following form
\begin{equation}
\label{s51}
\Gamma^{-1}(\zeta)=I+\left(\left|p_{1}\right\rangle, \cdots,\left|p_{n}\right\rangle\right) \mathcal{D}(\zeta)\left(\begin{array}{c}
\left\langle q_{n}\right| \\
\vdots \\
\left\langle q_{1}\right|
\end{array}\right), \quad
\Gamma(\zeta)=I+\left(\left|\bar{q}_{n}\right\rangle, \cdots,\left|\bar{q}_{1}\right\rangle\right) \bar{\mathcal{D}}(\zeta)\left(\begin{array}{c}
\left\langle\bar{p}_{1}\right| \\
\vdots \\
\left\langle\bar{p}_{n}\right|
\end{array}\right),
\end{equation}
where $\mathcal{D}(\zeta)$ and $\bar{\mathcal{D}}(\zeta)$ are $n \times n$ block matrices,

\[
\begin{array}{c}
\mathcal{D}(\zeta)=\left(\begin{array}{ccccc}
(\zeta-\zeta_{1})^{-1} & (\zeta-\zeta_{1})^{-2} & \cdots & (\zeta-\zeta_{1})^{-n} \\
0 & \ddots & \ddots & \vdots \\
\vdots & \ddots & (\zeta-\zeta_{1})^{-1} & (\zeta-\zeta_{1})^{-2} \\
0 & \cdots & 0 & (\zeta-\zeta_{1})^{-1}
\end{array}\right), \quad  \\ \bar{\mathcal{D}}(\zeta)=\left(\begin{array}{cccc}
(\zeta-\zeta_{1})^{-1} & 0 & \cdots & 0 \\
(\zeta-\zeta_{1})^{-2} & (\zeta-\zeta_{1})^{-1} & \ddots & \vdots \\
\vdots & \ddots & \ddots & 0 \\
(\zeta-\zeta_{1})^{-n} & \cdots & (\zeta-\zeta_{1})^{-2} & (\zeta-\zeta_{1})^{-1}
\end{array}\right).
\end{array}
\]
\end{lemma}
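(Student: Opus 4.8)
The plan is to establish both representations in \eqref{s51} by induction on the order $n$, assembling the product $\Gamma=\chi_{n}\chi_{n-1}\cdots\chi_{1}$ one elementary factor at a time and tracking how its partial-fraction structure grows. I would lean on three elementary facts: each factor $\chi_{i}(\zeta)=\mathbb{I}+\frac{\bar{\zeta}_{1}-\zeta_{1}}{\zeta-\bar{\zeta}_{1}}\mathbb{P}_{i}$ contributes a single simple pole at $\bar{\zeta}_{1}$ while its inverse $\chi_{i}^{-1}(\zeta)=\mathbb{I}+\frac{\zeta_{1}-\bar{\zeta}_{1}}{\zeta-\zeta_{1}}\mathbb{P}_{i}$ has a single simple pole at $\zeta_{1}$; each $\mathbb{P}_{i}$ is a rank-one projector; and since $\det\chi_{i}=\frac{\zeta-\zeta_{1}}{\zeta-\bar{\zeta}_{1}}$ one has $\Gamma(\zeta)\to I$ as $\zeta\to\infty$ with $\det\Gamma=\big(\frac{\zeta-\zeta_{1}}{\zeta-\bar{\zeta}_{1}}\big)^{n}$. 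It follows that $\Gamma^{-1}=\chi_{1}^{-1}\cdots\chi_{n}^{-1}$ is a rational matrix whose only singularity is a pole of order at most $n$ at $\zeta_{1}$, so it must admit a partial-fraction expansion $\Gamma^{-1}(\zeta)=I+\sum_{k=1}^{n}(\zeta-\zeta_{1})^{-k}N_{k}$; the whole content of the lemma is that the residue matrices $N_{k}$ factor through the upper-triangular Toeplitz block $\mathcal{D}(\zeta)$, and symmetrically for $\Gamma$ through $\bar{\mathcal{D}}(\zeta)$ (whose entries carry the pole at $\bar{\zeta}_{1}$).

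For the inductive step on $\Gamma^{-1}$ I would multiply the accumulated matrix $R_{m-1}=\chi_{1}^{-1}\cdots\chi_{m-1}^{-1}$ on the right by the next factor $\chi_{m}^{-1}$. Its identity part reproduces $R_{m-1}$, while its rank-one part contributes $\frac{\zeta_{1}-\bar{\zeta}_{1}}{(\zeta-\zeta_{1})\langle\bar{v}_{m}|v_{m}\rangle}\big(R_{m-1}|v_{m}\rangle\big)\langle\bar{v}_{m}|$. The crucial mechanism is that the extra factor $\frac{1}{\zeta-\zeta_{1}}$ multiplies the poles $(\zeta-\zeta_{1})^{-k}$ already carried by the column $R_{m-1}|v_{m}\rangle$ and raises each to order $(\zeta-\zeta_{1})^{-(k+1)}$; this uniform upward shift of the diagonals is precisely what promotes the $(m-1)\times(m-1)$ upper-triangular Toeplitz block to the $m\times m$ one, adjoining a new left column vector $|p_{m}\rangle$ and a new right row vector $\langle q_{m}|$. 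Because the shift acts identically on every diagonal, the constant-along-diagonals (Toeplitz) property survives each step, which is the key structural point.

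The representation of $\Gamma=\chi_{n}\cdots\chi_{1}$ follows by the same computation read from the left, now with poles accumulating at $\bar{\zeta}_{1}$, which produces the lower-triangular Toeplitz block $\bar{\mathcal{D}}(\zeta)$ together with the barred vectors $|\bar{q}_{j}\rangle$ and $\langle\bar{p}_{i}|$. Alternatively, and more economically, I would deduce it from the involution: the Hermitian relations $\langle\bar{v}_{i}|=|v_{i}\rangle^{\dagger}$ combined with the symmetry \eqref{2.57}, $(P^{+})^{\dagger}(\zeta^{*})=P^{-}(\zeta)$, force $\Gamma(\zeta)=\big[\Gamma^{-1}(\zeta^{*})\big]^{\dagger}$, which maps the upper-triangular form with poles at $\zeta_{1}$ onto the lower-triangular form with poles at $\bar{\zeta}_{1}$ and identifies the barred vectors with the conjugates of the unbarred ones. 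Along the way I would verify the two defining properties of the ansatz, namely $\Gamma\to I$ at infinity and that $|\Gamma^{-1}(\zeta)|$ carries a pole of exactly order $n$ at $\zeta_{1}$, to confirm it is the correct rational dressing matrix.

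The main obstacle will be the combinatorial bookkeeping of the recursively defined kernel vectors $|v_{i}\rangle\in\operatorname{Ker}\big(P^{+}\chi_{1}^{-1}\cdots\chi_{i-1}^{-1}(\zeta_{1})\big)$ and the verification that the cross terms assemble into an \emph{exact} Toeplitz matrix rather than a generic triangular one. Concretely, expanding the $n$-fold product one encounters strings of consecutive projectors $\mathbb{P}_{i_{1}}\cdots\mathbb{P}_{i_{r}}$, and one must show they collapse so that the coefficient of $(\zeta-\zeta_{1})^{-k}$ depends only on $k$ and not on the individual indices; this is exactly what pins down the constant diagonals and lets one read off the single length-$n$ Jordan chain guaranteed by geometric multiplicity one. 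Establishing this collapse, and re-expressing the factor-by-factor vectors $|v_{i}\rangle,\langle\bar{v}_{i}|$ in terms of the final chain vectors $|p_{i}\rangle,\langle q_{j}|$, is where the genuine work lies; the normalization and determinant checks above are then routine.
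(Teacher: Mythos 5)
Your plan is essentially the paper's own argument: the paper gives no written proof but states that the lemma "can be proved by induction as in [Shchesnovich--Yang]", and your factor-by-factor induction on the elementary dressing matrices $\chi_i$, tracking how each new simple-pole factor raises the pole orders and preserves the triangular Toeplitz structure (with the involution $\Gamma(\zeta)=[\Gamma^{-1}(\zeta^{*})]^{\dagger}$ handling the second representation), is exactly that induction. Your reading that the entries of $\bar{\mathcal{D}}(\zeta)$ should carry the pole at $\bar{\zeta}_{1}$ rather than $\zeta_{1}$ is also correct; the statement as printed contains a typo there.
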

This lemma can be proved by induction as in \cite{yjk2003a}. Besides, we notice that in the expressions for $\Gamma^{-1}(\zeta)$ and $\Gamma(\zeta),$ only half of the vector parameters, i.e.: $\left|p_{1}\right\rangle, \cdots,\left|p_{n}\right\rangle$
 and  $\left\langle \bar{p}_{1}\right|,  \cdots,\left\langle \bar{p}_{n}\right|$
    are independent. In fact, the rest of the vector parameters in \eqref{s51} can be derived by calculating the poles of each order in the identity $\Gamma(\zeta) \Gamma^{-1}(\zeta)=I$ at $\zeta=\zeta_{1}$
\[
\Gamma\left(\zeta_{1}\right)\left(\begin{array}{c}
\left|p_{1}\right\rangle \\
\vdots \\
\left|p_{n}\right\rangle
\end{array}\right)=0,
\]
where
\[
\Gamma(\zeta)=\left(\begin{array}{cccc}
\Gamma(\zeta) & 0 & \cdots & 0 \\
\frac{d}{d \zeta} \Gamma(\zeta) & \Gamma(\zeta) & \ddots & \vdots \\
\vdots & \ddots & \ddots & 0 \\
\frac{1}{(n-1) !} \frac{d^{n-1}}{d \zeta^{n-1}} \Gamma(\zeta) & \cdots & \frac{d}{d \zeta} \Gamma(\zeta) & \Gamma(\zeta)
\end{array}\right).
\]

Hence, in terms of the independent vector parameters, results \eqref{s51} can be formulated in a more compact form as in \cite{yjk2003a} and here we just avoid these overlapped parts. In the following, we derive this compact formula via the method of gDT \cite{llm2}. We intend to investigate the relation between dressing matrices and DT for ENLS equation in the high-order zero case. The essence of DT is a kind of gauge transformation. Following the project proposed in \cite{llm2015}, we can construct the gDT for ENLS equation as well.

The elementary form of DT has already been constructed in \cite{ybss2019}, then it is obvious to notice that:
$G_{1}\left(\zeta_{1}+\epsilon\right)\left|v_{1}\left(\zeta_{1}+\epsilon\right)\right\rangle=0 $. Denoting $\left|\chi_{1}^{[0]}\left(\zeta_{1}\right)\right\rangle=\left|v_{1}\left(\zeta_{1}\right)\right\rangle$,
and considering the following limitation:
\[
\left|\chi_{1}^{|1|}\left(\zeta_{1}\right)\right\rangle \triangleq \lim _{\epsilon \rightarrow 0} \frac{G_{1}\left(\zeta_{1}+\epsilon\right)\left|\chi_{1}^{|0|}\left(\zeta_{1}+\epsilon\right)\right\rangle}{\epsilon}=\frac{d}{d \zeta}\left[G_{1}(\zeta)\left|\chi_{1}^{[0]}(\zeta)\right\rangle\right]_{\zeta=\zeta_{1}},
\]
then $\left|\chi_{1}^{(1)}\right\rangle$ can be used to construct the next step DT, i.e.:
\[G_{1}^{[1]}(\zeta)=\left(I+\frac{\bar{\zeta}_{1}-\zeta_{1}}{\zeta-\bar{\zeta}_{1}} \mathbb{P}_{1}^{[1]}\right),\quad \mathbb{P}_{1}^{[1]}=\frac{\left|\chi_{1}^{[1]}\right\rangle\left\langle\chi_{1}^{[1]}\right|}{\left\langle\chi_{1}^{[1]} | \chi_{1}^{[1]}\right\rangle}.
\]
Generally, continuing this process we obtain:
\[
\left|\chi_{1}^{[N]}\right\rangle=\lim _{\epsilon \rightarrow 0} \frac{G_{1}^{[N-1]} \ldots G_{1}^{[1]} G_{1}^{[0]}\left(\zeta_{1}+\epsilon\right)\left|\chi_{1}^{[0]}\left(\zeta_{1}+\epsilon\right)\right\rangle}{\epsilon^{N}}.
\]

The N-times generalized Darboux matrix can be represented as:
\[
T_{N}(\zeta)=G_{1}^{[N-1]} \ldots G_{1}^{[1]} G_{1}^{[0]}(\zeta),
\]
where
\[G_{1}^{[i]}(\zeta)=\left(I+\frac{\bar{\zeta}_{i}-\zeta_{i}}{\zeta-\bar{\zeta}_{i}} \mathbb{P}_{1}^{[i]}\right),\quad \mathbb{P}_{1}^{[i]}=\frac{\left|\chi_{1}^{[i]}\right\rangle\left\langle\chi_{1}^{[i]}\right|}{\left\langle\chi_{1}^{[i]}| \chi_{1}^{[i]}\right\rangle}.
\]

In addition, the transformation between different potential matrices is:
\[
Q^{(N)}=Q+i\left[\Lambda, \sum_{j=0}^{N-1}\left(\bar{\zeta_{1}}-\zeta_{1}\right) \mathbb{P}_{1}^{[j]}\right].
\]
In this expression, $P_{1}^{[i]}$ is rank-one matrix, so $G_{1}^{[i]}(\zeta)$ can be also decomposed into the summation of simple fraction, that means the multiple product form of $T_{N}$ can be directly simplified by the conclusion of Lemma $1 .$ In other words, the above generalized Darboux matrix for ENLS equation can be given in the following theorem:
\\

\begin{theorem}
In the case of one pair of elementary high-order zero, the generalized Darboux matrix for ENLS equation can be represented as \cite{ybss2019}:
\[
T_{N}=I-Y M^{-1} \bar{\mathcal{D}}(\zeta) Y^{\dagger},
\]
where $\bar{\mathcal{D}}(\zeta)$ is $N \times N$ block Toeplitz matrix which has been given before, $Y$ is a $2 \times  N$ matrix:
\[
\begin{array}{c}
Y=\left(\left|v_{1}\right\rangle, \ldots, \frac{\left|v_{1}\right\rangle^{(N-1)}}{(N-1) !}\right), \\
\left|v_{1}\right\rangle^{(j)}=\lim _{\epsilon \rightarrow 0} \frac{d^{j}}{d \epsilon^{j}}\left|v_{1}\left(\zeta_{1}+\epsilon\right)\right\rangle,
\end{array}
\]
and $M$ is $N \times N$ matrix:
\[
M=\left(\begin{array}{ll}
M^{[ij]}
\end{array}\right), \quad M^{[i j]}=\left(M_{l, m}^{[\mathrm{i}, j]}\right)_{N \times N},
\]
with
\[
M_{l, m}^{[i, j]}=\lim _{\epsilon, \bar{\epsilon} \rightarrow 0} \frac{1}{(l-1) !(m-1) !} \frac{\partial^{m-1}}{\partial \epsilon^{m-1}} \frac{\partial^{l-1}}{\partial \bar{\epsilon}^{l-1}}\left[\frac{\left\langle y_{i} | y_{j}\right\rangle}{\zeta_{j}-\bar{\zeta}_{i}+\epsilon-\bar{\epsilon}}\right].
\]
\end{theorem}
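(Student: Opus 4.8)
The plan is to derive the compact formula as the confluent (coalescence) limit of the $N$-soliton dressing matrix already obtained for simple zeros, the limit being realized concretely by the gDT iteration. First I would observe that, since every elementary factor $G_1^{[i]}(\zeta)=I+\frac{\bar\zeta_1-\zeta_1}{\zeta-\bar\zeta_1}\mathbb{P}_1^{[i]}$ carries a rank-one residue, the ordered product $T_N=G_1^{[N-1]}\cdots G_1^{[0]}(\zeta)$ is a rational $2\times2$ matrix function whose only singularity is a pole of order $N$ at $\bar\zeta_1$, with determinant $\det T_N=(\zeta-\zeta_1)^N/(\zeta-\bar\zeta_1)^N$ and canonical normalization $T_N\to I$ as $\zeta\to\infty$. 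This is exactly the analytic profile treated in Lemma 1, so by the conclusion of that lemma $T_N$ must collapse into the form $I+(\text{column vectors})\,\bar{\mathcal{D}}(\zeta)\,(\text{row vectors})$ with the block Toeplitz matrix $\bar{\mathcal{D}}(\zeta)$ of \eqref{s51}; it remains only to identify the vector parameters and the inner matrix in terms of the seed eigenvector.

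For the identification I would start from the simple-zero formula $\Gamma=I+\sum_{j,k}\frac{|v_j\rangle(M^{-1})_{jk}\langle v_k|}{\zeta-\bar\zeta_k}$ of Section 3 with $N$ distinct zeros placed at $\zeta_1+\epsilon_l$ and let all nodes coalesce, $\epsilon_l\to0$. Expanding the kernel vectors $|v_1(\zeta_1+\epsilon_l)\rangle$ in Taylor series shows that the $N$-dimensional span of the $|v_j\rangle$ is replaced by the span of the derivatives $|v_1\rangle^{(j)}=\lim_{\epsilon\to0}\partial_\epsilon^{\,j}|v_1(\zeta_1+\epsilon)\rangle$; collecting these (divided by the factorials) into the $2\times N$ matrix $Y=(|v_1\rangle,\dots,|v_1\rangle^{(N-1)}/(N-1)!)$ packages all the independent data, and the Hermitian involution $\langle v|=|v\rangle^{\dagger}$ inherited from \eqref{2.57} turns the row block into $Y^{\dagger}$. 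Simultaneously the simple poles $1/(\zeta-\bar\zeta_k)$ fuse into the higher-order poles $1/(\zeta-\bar\zeta_1)^m$ arranged in the Toeplitz pattern of $\bar{\mathcal{D}}(\zeta)$, which confirms the outer factors of the claimed expression and matches the gDT iteration $|\chi_1^{[N]}\rangle=\lim_{\epsilon\to0}\epsilon^{-N}G_1^{[N-1]}\cdots G_1^{[0]}(\zeta_1+\epsilon)|\chi_1^{[0]}(\zeta_1+\epsilon)\rangle$, which performs precisely this differentiation internally.

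The inner matrix is handled the same way: the Gram-type entries $\langle v_j|v_k\rangle/(\bar\zeta_j-\zeta_k)$ of \eqref{2.741}, with the nodes shifted to $\zeta_1+\epsilon$ and $\bar\zeta_1-\bar\epsilon$, become in the limit the mixed derivatives
\[
M_{l,m}^{[i,j]}=\lim_{\epsilon,\bar\epsilon\to0}\frac{1}{(l-1)!(m-1)!}\,\partial_{\bar\epsilon}^{\,l-1}\partial_{\epsilon}^{\,m-1}\frac{\langle y_i|y_j\rangle}{\zeta_j-\bar\zeta_i+\epsilon-\bar\epsilon},
\]
assembled into the block matrix $M=(M^{[ij]})$; the finiteness and invertibility of $M$ in the limit, together with the commutation of the $\epsilon,\bar\epsilon$ differentiations with the matrix inversion, is what guarantees that $M^{-1}\bar{\mathcal{D}}(\zeta)$ reconstitutes the correct residues of $T_N$. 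I would finish by checking that $I-YM^{-1}\bar{\mathcal{D}}(\zeta)Y^{\dagger}$ annihilates the derivative data at $\zeta=\zeta_1$ in the block-companion sense displayed before the theorem, so that uniqueness of the dressing matrix with this pole structure and normalization forces equality.

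I expect the principal obstacle to be the bookkeeping of the confluent limit: one must verify that the nested limits defining $|\chi_1^{[i]}\rangle$ and the doubly-differentiated Gram entries are mutually consistent and reproduce exactly the block Toeplitz pattern rather than merely a matrix of the same rank, and that no spurious lower-order poles survive. Since the analogous confluence has already been carried out for the Sasa--Satsuma hierarchy, I would model the induction on $N$ after \cite{ybss2019}, adapting only the $\zeta$-dependence that enters through the ENLS dispersion relation $i\zeta^2-4i\alpha\zeta^3-8i\gamma\zeta^4$ in the phase $\theta_k$ of \eqref{2.121}.
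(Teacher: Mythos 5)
Your proposal is correct and follows essentially the same route as the paper: the structural form comes from Lemma 1 applied to the product of rank-one elementary factors $G_1^{[N-1]}\cdots G_1^{[0]}$, and the vector parameters and the matrix $M$ are identified through the confluent ($\epsilon,\bar\epsilon\to 0$) limits that define the gDT, exactly the ``direct calculation as in \cite{llm2015}'' that the paper invokes. Your version is in fact more explicit than the paper's one-line proof, since you also spell out the uniqueness/Liouville step and flag the invertibility of $M$ in the coalescence limit, which the paper leaves implicit.
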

Theorem 1 can be proved via directly calculation as in \cite{llm2015}.

Therefore, if $\Phi^{|N|}=T_{N} \Phi,$ then $\Phi^{[N]}$ indeed solves spectral problem \eqref{b}.
Substituting $T_{N}$ into the above relation and letting spectral $\zeta$ go to infinity, we have the relation:
\[
Q^{[N]}=Q-i\left[\Lambda,\left(\left|v_{1}\right\rangle, \ldots, \frac{\left|v_{1}\right\rangle^{(N-1)}}{(N-1) !}\right) M^{-1}\left(\begin{array}{c}
\left\langle v_{1}\right| \\
\vdots \\
\frac{\langle v_{1}|^{(N-1)}}{(N-1) !}
\end{array}\right)\right].
\]
Moreover, the transformations between the potential functions are:
\begin{equation}\label{59}
Q_{j, l}^{[N]}=Q_{j, l}^{[0]}+2i\frac{|A_{j, l}|}{|M|}, \quad A_{j, l}=\left[\begin{array}{cc}
M & Y[l]^{\dagger} \\
Y[j] & 0
\end{array}\right], 1 \leq j, l \leq 2.
\end{equation}
Here the subscript $_{j, l}$ denotes the $j$ th row and $l$ th column element of matrix $A$, and $Y[l]$ represents the $j$ th row of matrix $Y$.

\section{Dynamics of high-order solitons in the ENLS equation}

For simple, we consider the second-order fundamental soliton, which corresponds to a single pair of purely imaginary eigenvalues, $\zeta_{1}=$ $i \eta_{1} \in i \mathbb{R}_{+},$ and $\bar{\zeta}_{1}=i \bar{\eta}_{1} \in i \mathbb{R}_{-},$ where $\eta_{1}>$
0 and $\bar{\eta}_{1}=-\eta_{1}<0$. In this case, taking $v_{10}(\epsilon)=\left[1, \mathrm{e}^{i \theta_{10}-\theta_{11}\epsilon}\right]^{\mathrm{T}}$ and $\bar{v}_{10}(\bar{\epsilon})=$
$\left[1, \mathrm{e}^{i \bar{\theta}_{10}-\tilde{\theta}_{11} \bar{\epsilon}}\right]^{\mathrm{T}},$ where $\theta_{10}, \theta_{11}, \bar{\theta}_{10}, \bar{\theta}_{11}$ are real constants. Substituting these expressions into high-order soliton formula \eqref{59} with  $N=2, Q_{1,2}^{[0]}=0$. Then we obtain an analytic expression for the second-order fundamental soliton solution of \eqref{enls}
\begin{equation}\label{b29}
\begin{aligned}
&u(x, t)=\\
&2(\eta_{1}-\bar{\eta}_{1})\frac{t_{11}e^{2\eta_{1}x+(2i\eta_{1}^{2}+8\alpha \eta_{1}^{3}+16i\gamma \eta_{1}^{4})t+i\bar{\theta}_{10}}
+t_{12}e^{2\bar{\eta}_{1}x+(2i\bar{\eta}_{1}^{2}+8\alpha \bar{\eta}_{1}^{3}+16i\gamma \bar{\eta}_{1}^{4})t-i\theta_{10}}}{4 \cosh ^{2}(w)+F},
\end{aligned}
\end{equation}
\begin{equation}\notag
w=(\eta_{1}-\bar{\eta}_{1})x+[i(\bar{\eta}_{1}^{2}-\eta_{1}^{2})
+4\alpha(\bar{\eta}_{1}^{3}-\eta_{1}^{3})+8i\gamma(\bar{\eta}_{1}^{4}-\eta_{1}^{4})]t-\frac{i}{2}(\theta_{10}+\bar{\theta}_{10}),
\end{equation}
\begin{equation}\notag
t_{11}=(\bar{\eta}_{1}-\eta_{1})((64\bar{\eta}_{1}^{3}\gamma-24i\bar{\eta}_{1}^{2}\alpha+4\bar{\eta}_{1})t-2ix)-2i,
\end{equation}
\begin{equation}\notag
t_{12}=(\eta_{1}-\bar{\eta}_{1})((64\eta_{1}^{3}\gamma-24i\eta_{1}^{2}\alpha+4\eta_{1})t-2ix)-2i,\\
\end{equation}
\begin{equation}\notag
F(x,t)=(t_{11}+2i)(t_{12}+2i).
\end{equation}

Let $\bar{\eta}_{1}=-\eta_{1}, \bar{\theta}_{10}=-\theta_{10}$, $u(x,t)$ can be written in the form of a traveling solitary wave:
\begin{equation}\label{b29}
\begin{aligned}
&u(x,t)=\psi(x,t)e^{i(2\eta_{1}^{2}+16\gamma \eta_{1}^{4}-\theta_{10})},
\end{aligned}
\end{equation}
\begin{equation}
\begin{aligned}
&\psi(x,t)=4\eta_{1}\frac{(t_{11}e^{2\eta_{1}x+8\alpha \eta_{1}^{3}t}
+t_{12}e^{-2\eta_{1}x-8\alpha \eta_{1}^{3}t})}{4 \cosh ^{2}(2\eta_{1}x
-8\alpha\eta_{1}^{3}t)+F},
\end{aligned}
\end{equation}
\begin{equation}
\begin{aligned}
&|\psi(x,t)|^{2}=16\eta_{1}^{2}\frac{(t_{11}^{2}e^{4\eta_{1}x+16\alpha \eta_{1}^{3}t}
+t_{12}^{2}e^{-4\eta_{1}x-16\alpha \eta_{1}^{3}t}+2t_{11}t_{12})}{(4 \cosh ^{2}(2\eta_{1}x
-8\alpha\eta_{1}^{3}t)+F)^{2}}.
\end{aligned}
\end{equation}
The center trajectory $\Sigma_{+}$ and $\Sigma_{-}$ for this solution can be approximately described by the following two curves
\[\begin{aligned}
\Sigma_{+}: (\eta_{1}-\bar{\eta}_{1})x+4\alpha(\bar{\eta}_{1}^{3}-\eta_{1}^{3})t+\frac{1}{2}ln|F|=0,\\
\Sigma_{-}: (\eta_{1}-\bar{\eta}_{1})x+4\alpha(\bar{\eta}_{1}^{3}-\eta_{1}^{3})t-\frac{1}{2}ln|F|=0.
\end{aligned}\]
Moreover, regardless of the effect brought by the logarithmic part when $t \rightarrow \pm\infty$, two solitons separately move along each curve at nearly the same velocity, which is approximate to $4\alpha (\bar{\eta}_{1}^{2}+\eta_{1}\bar{\eta}_{1}+\eta_{1}^{2})$.

Due to $\eta_{1}-\bar{\eta}_{1}>0$, with simple calculation, it is found that $|u(x,t)|$ possesses the following asymptotic estimation:
\[\begin{aligned}
|u(x,t)|\rightarrow 0, \quad x \rightarrow \pm \infty.
\end{aligned}\]
However, with the development of time, a simple asymptotic analysis with estimation on the leading-order terms shows that: when soliton \eqref{b29} is moving on $\Sigma_{+}$ or $\Sigma_{-}$, its amplitudes $|u|$ can approximately vary as
\begin{equation}
u(x, t) \sim
\begin{cases}
\frac{2\left|\eta_{1}-\bar{\eta}_{1}\right| \mathrm{e}^{\left(\eta_{1}+\bar{\eta}_{1}\right)x}}{\left|\mathrm{e}^{(4i(\bar{\eta}_{1}^{2}-\eta_{1}^{2})+32i\gamma(\bar{\eta}_{1}^{4}-\eta_{1}^{4}))t-i (\operatorname{arg}[\mathcal{F}(x, t)]+2k\pi)+ i\left(\theta_{10}+\bar{\theta}_{10}\right)}+1\right|}, \quad t \sim  +\infty,\\
\\
\frac{2\left|\eta_{1}-\bar{\eta}_{1}\right| \mathrm{e}^{-(\eta_{1}+\bar{\eta}_{1})x}}{\left|\mathrm{e}^{-(4i(\bar{\eta}_{1}^{2}-\eta_{1}^{2})+32i\gamma(\bar{\eta}_{1}^{4}-\eta_{1}^{4}))t-i (\operatorname{arg}[\mathcal{F}(x, t)]+2k\pi)- i\left(\theta_{10}+\bar{\theta}_{10}\right)}+1\right|}, \quad t \sim -\infty,
\end{cases}
\end{equation}
$k\in \mathbb{Z}$.\\
Let $\eta_{1}= \frac{i}{2}, \bar{\eta}_{1}=-\frac{i}{2}, \theta_{10}=\bar{\theta}_{10}=\theta_{11}=\bar{\theta}_{11}=0$.
Because the effect brought by the logarithmic part, choose different parameter values $(\alpha, \gamma)$, two solitons separately move along each curve with different velocity, direction and shape (see Figure 7).
\begin{figure}
\centering
\includegraphics[width=4cm,height=4cm]{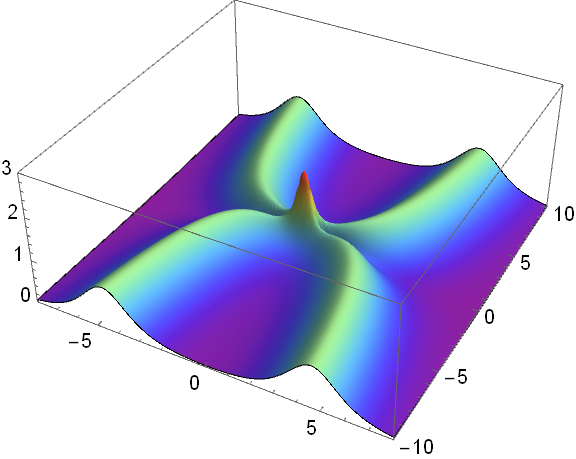}
$A$\quad
\includegraphics[width=4cm,height=4cm]{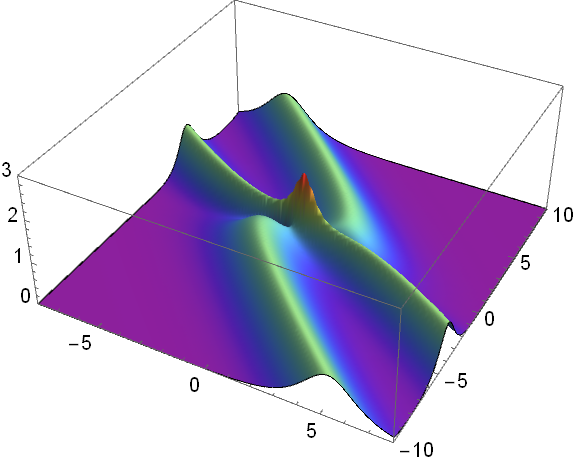}
$B$\\
\includegraphics[width=4cm,height=4cm]{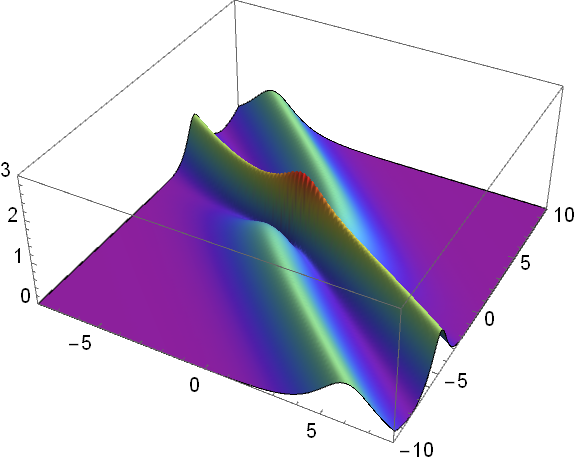}
$C$\quad
\includegraphics[width=4cm,height=4cm]{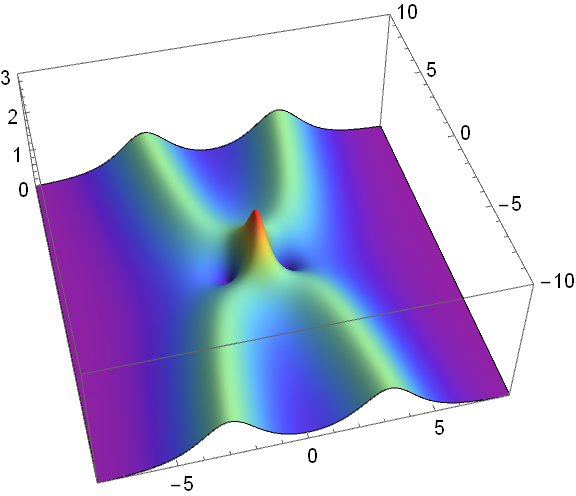}
$D$
\caption{3-D plot for the high-order solution evolution of case A, B, C and D.}
\label{t7}
\end{figure}

\section*{Acknowledgements}
 The project is supported by the National Natural Science Foundation of China (No.11675054), Shanghai Collaborative Innovation Center of Trustworthy Software for Internet of Things (No.ZF1213), and Science and Technology Commission of Shanghai Municipality (No.18dz2271000).


\begin{thebibliography}{99}
\bibitem{nwgj2000} Ablowitz M J, Chakravarty S, Trubatch A D and Villarroel J, A novel class of solutions of the non-stationary Schr\"{o}dinger and the Kadomtsev-Petviashvili I equations, {\it Phys. Lett. A.} \textbf{267}, 132-146, 2000.
\bibitem{12} Ablowitz M J and Newell A C, The decay of the continuous spectrum for solutions of the KdV equations, {\it J. Math. Phys.} \textbf{14}, 1277-1284, 1973.
\bibitem{cy28} Ablowitz M J, Kaup D J, Newell A C and Segur H, Method for solving the sine-Gordon equation, {\it Phys. Rev. Lett.} \textbf{30}, 1262-1264, 1973.
\bibitem{enlseq2014} Ankiewicz A and Akhmediev N, Higher-order integrable evolution equation and its soliton solutions, {\it Phys. Lett. A.} \textbf{378}, 358, 2014.
\bibitem{be1967}  Benneyand D J and Newell A C, The propagation of nonlinear wave envelopes, {\it J. Math. Phys.} \textbf{46}, 133-139, 1967.
\bibitem{llm2015}  Bian D F, Guo B L and Ling L M, High-order soliton solution of Landau-Lifshitz equation, {\it Stud. Appl. Math.} \textbf{134}, 181-214, 2015.
\bibitem{b2017} Biondini G and Mantzavinos D, Long-Time Asymptotics for the Focusing Nonlinear Schr\"{o}dinger Equation with Nonzero Boundary Conditions at Infinity and Asymptotic Stage of Modulational Instability, {\it Commun. Pure Appl. Math.} \textbf{70}, 2300-2365, 2017.
    \bibitem{bb22019} Bilman D, Buckingham R and Wang D S, Large-Order Asymptotics for Multiple-Pole Solitons of the Focusing Nonlinear Schr\"{o}dinger Equation II: Far-Field Behavior, arXiv:1911.04327v1, 2019.
\bibitem{bl2018} Bilman D, Ling L M and Miller D P, Extreme Superposition: Rogue Waves of Infinite Order and the Painl\'{e}ve-III Hierarchy, arXiv:1806.00545v1, 2018.
\bibitem{D1993} Deift P and Zhou X, A steepest descent method for oscillatory Riemann-Hilbert problems Asymptotics for the mKdV equation, {\it Ann. of Math.} \textbf{137}, 295-368, 1993.
\bibitem{22lm} Deift P, Park J, Long-time asymptotics for solutions of the NLS equalion with a Delta
potential and even initial data, {\it Lett. Math. Phys.} \textbf{96}, 143-156, 2011.
\bibitem{23lm} Deift P, Venakides S and Zhou X, New results in small dispersion KdV
by an extension of the steepest descent method for Riemann-Hilbert problems, {\it Int Math Res Notices}, \textbf{6}, 285-299, 1997.
\bibitem{25lm} Deift P, Zhou X, Long-time asymptotics for solutions of the NLS equation with initial data in a weighted Sobolev space, {\it Commun. Pure. Appl. Math.} \textbf{56}, 1029-1077, 2003.
\bibitem{cy292} Estabrook F B and Wahlquist H D, Prolongation structures of non-linear evolutions equations II. The non-linear Schr\"{o}dinger equation. {\it J. Math. Phys.} \textbf{17}, 1293-1297, 1976.
\bibitem{36lm} Fokas A S, Lenells J, The unified method: I. Nonlinearizable problems on the half-line, {\it J. Phys. A: Math. Theor.} \textbf{45}, 195201, 2012.
    
    \bibitem{gj164} Gagnon L and Sti\'{e}venart N,  N soliton interaction in optical fibers: the multiple pole case, {\it Opt. Lett.} \textbf{19} 619-621, 1994.
\bibitem{ggkm1967} Gardner C S, Greene I M, Kruskal M D and Mirua R M, Method for solving the Korteweg-de Vries equation, {\it Phys. Rev. Lett.} \textbf{19}, 1095-1097, 1967.
\bibitem{llm2} Guo B L, Ling L M and Liu Q P, High-Order Solutions and Generalized Darboux Transformations of Derivative Nonlinear Schr\"{o}inger Equations, {\it Stud. Appl. Math.} \textbf{130}, 317-344, 2012.
\bibitem{ha19731} Hasegawa A and Tappert F, Transmission of stationary nonlinear optical pulses in dispersive dielectric fibres I, Anomalous dispersion, {\it Appl. Phys. Lett.} \textbf{23}, 142-144, 1973.
\bibitem{ha19732} Hasegawa A and Tappert F, Transmission of stationary nonlinear optical pulses in dispersive dielectric fibres II, Normal dispersion, {\it Appl. Phys. Lett.} \textbf{23}, 171-172, 1973.
\bibitem{It1981} Its A R, Asymptotics of solutions of the nonlinear Schr\"{o}dinger equation and isomonodromic deformations of systems of linear differential equations, {\it Sov. Math. Dokl.} \textbf{24}, 452-456, 1981.
\bibitem{41lm} Jimbo M, Miwa T, Mori Y and Sato M, Density matrix an impenetrable bose gas and the Painleve transcendent, {\it Phys. D}, \textbf{1}, 80-139, 1980.
\bibitem{lax1968} Lax P D, Integrals of nonlinear equations of evolution and solitary waves, {\it Commun. Pure Appl. Math.} \textbf{21}, 467-490, 1968.
\bibitem{60lm} Lenells J and Fokas A S, The unified method: II. NLS on the half-line t-periodic boundary conditions, {\it J. Phys. A: Math. Theor.} \textbf{45}, 195202, 2012.
\bibitem{61lm} Lenells J and Fokas A S, The unified method: III. Nontinearizable problem on the interval,
{\it J. Phys. A: Math. Theor.} \textbf{45}, 195203, 2012.
\bibitem{13} Manakov S V, Nonlinear Fraunhofer diffraction, {\it Sov. Phys. JETP.} \textbf{38}, 693-696, 1974.
\bibitem{67lm} Monvel A B, Its A and Kotlyarov V, Long-time asymptotics for the focusing NLS equation with
time-periodic boundary condition on half-line, {\it Commun. Math. Phys.} \textbf{290}, 479-522, 2009.
\bibitem{no1984}  Novikov S, Manakov S V, Pitaevskii L P and Zakharov V E, {\it Theory of Solitons}, Plenum, New York, 1984.
\bibitem{pt2019} Peng W Q, Tian S F, Wang X B, Zhang T T and Fang Y, Riemann-Hilbert method and multi-soliton solutions for
three-component coupled nonlinear Schr\"{o}dinger equations, {\it J. Geom. Phys.} \textbf{146}, 103508, 2019.
\bibitem{14} Shabat A  B, the Korteweg de Vries equation. {\it Sov. Math. Dokl.} \textbf{14}, 1266-1269, 1973.
\bibitem{yjk2003a} Shchesnovich V S and Yang J K, Higher-order solitons in the N-wave system. Stud. Appl. Math. \textbf{110}, 297-332, 2003.
\bibitem{yjk2003b} Shchesnovich V S and Yang J K, General soliton matrices in the Riemann-Hilbert problem for integrable nonlinear equations. {\it J. Math. Phys.} \textbf{44}, 4604-4639, 2003.
\bibitem{73lm} Tovbis A, Venakides S and Zhou X, On long time behavior of semiclassical (zero dispersion) limit of the focusing NLS equation: pure radiaton case, {\it Commun. Pure Appl. Math}, \textbf{59}, 1379-1432, 2006.
    \bibitem{nwgj9}  Tsuru H and Wadati M, The multiple pole solutions of the sine-Gordon equation, {\it J. Phys. Soc. Jpn.} \textbf{53}, 2908-2921, 1984.
\bibitem{75lm} Vartanian A H, Long-time asymptotics of solutions to the Cauchy problem for the defocusing nonlinear Schr\"{o}dinger equation with finite-density initial data, {\it Math Phys. Anal Geom.} \textbf{5}, 319-413, 2002.
\bibitem{nwgj1999}  Villarroel J and Ablowitz M J, On the Discrete Spectrum of the non-stationary Schr\"{o}dinger Equation and Multipole Lumps of the Kadomtsev-Petviashvili I Equation, {\it Comm. Math. Phys.} \textbf{207}, 1-42, 1999.
\bibitem{cy291} Wahlquist H D and Estabrook F B, Prolongation structures of non-linear evolutions equations I. The KdV equation. {\it J. Math. Phys.} \textbf{16}, 1-7, 1975.
\bibitem{wxy2020} Wang H T and Wen X Y, Soliton elastic interactions and dynamical analysis of a reduced integrable nonlinear Schr\"{o}dinger system on a triangular-lattice ribbon, {\it Nonlinear Dyn.} \textbf{100}, 1571-1587, 2020.
\bibitem{xj2013} Xu J and Fan E G, The unified transformation method for the Sasa-Satsuma equation on the half-line, {\it Proc. R. Soc. A.} \textbf{469}, 20130068, 2013.
\bibitem{xf2015} Xu J and Fan E G, Long-time asymptotics for the Fokas-Lenells equation with decaying initial value problem: Without solitons, {\it J. Differ. Equations.} \textbf{259}, 1098-1148, 2015.
\bibitem{xt32020} Xu T and He G, The coupled derivative nonlinear\"{o}inger equation: conservation laws, modulation instability and semirational solutions, {\it Nonlinear Dyn.} \textbf{100}, 2823-2837, 2020.
\bibitem{xt22019} Xu T and He G, Higher-order interactional solutions and rogue wave pairs for the coupled Lakshmanan-Porsezian-Daniel equations, {\it Nonlinear Dyn.} \textbf{98}, 1731-1744, 2019.
\bibitem{yb} Yang B and Chen Y, Dynamics of high-order solitons in the nonlocal nonlinear Schr\"{o}inger equations,  {\it Nonlinear Dyn.} \textbf{94}, 489-502, 2018.
\bibitem{ybss2019} Yang B and Chen Y, High-order soliton matrices for Sasa-Satsuma equation via local Riemann-Hilbert problem, {\it Nonlinear Anal-Real.} \textbf{45}, 918-941, 2019.
\bibitem{yjks} Yang J K, {\it Nonlinear Waves in Integrable and Non integrable Systems}, SIAM, Philadelphia, 2010.
\bibitem{yh2020} Yue Y F, Huang L L, and Chen Y, Modualtion instability, rogue waves and spectral analysis for the sixth-order nonlinear Schrodinger equation,  {\it Commun. Nonlinear Sci. Numer. Simulat.} \textbf{89}, 105284, 2020.
\bibitem{zk1965} Zabusky N J and Kruskal M D, Interaction of solitons in a collisionless plasma and the recurrence of initial states, {\it Phys. Rev. Lett.} \textbf{15}, 240-243, 1965.
    
    \bibitem{za1968} Zakharov V E, Stability of periodic waves of finite amplitude on the surface of a deep fluid, {\it Sov. Phys. J. Appl. Mech. Tech.} \textbf{4}, 190-194, 1968.
\bibitem{ha1972} Zakharov V E, Collapse of langmuir waves, {\it Sov. Phys. J. Appl. Mech. Tech.} \textbf{35}, 908-914, 1972.
\bibitem{zs1972} Zakharov V and Shabat A, Exact theory of two dimensional self-focussing and one-dimensional self-modulating waves in nonlinear media, {\it Sov. Phys. JETP.} \textbf{34}, 62-69, 1972.
\bibitem{83lm} Zakharov V E and Manakov S V, Asymptotic behavior of nonlinear wave systems integrated by
the inverse scattering method, {\it Sov. Phys. JETP.} \textbf{44}, 106-122, 1976.
\bibitem{k1989} Zhou X, The Riemann-Hilbert problem and inverse scattering, {\it SIAM J. Math. Anal.} \textbf{20}, 966-986, 1989.
\bibitem{xe2019} Zhang X E and Chen Y, Inverse scattering transformation for generalized nonblinear Schr\"{o}dinger equation, {\it Appl. Math. Lett.} \textbf{98}, 306-313, 2019.
 \bibitem{zys22020} Zhang Y S, Tao X and Xu S, The bound-state soliton solutions of the complex modified KdV equation, {\it Inverse probl.} \textbf{36}, 065003, 2020.
\bibitem{feg2019} Zhao Y and Fan E G, Inverse scattering transformation for the Fokas-Lenells equation with nonzero boundary conditions, arXiv: 1912.12400 [nlin.SI], 2019.
    \bibitem{zys12019} Zhang Y S, Rao J and Cheng Y, Riemann-Hilbert method for the Wadati-Konno-Ichikawa equation: N simple poles and one higher-order pole, {\it Physica D.} \textbf{399}, 173-185, 2019.
\end{thebibliography}
\end{document}